\def\BibTeX{{\rm B\kern-.05em{\sc i\kern-.025em b}\kern-.08em
    T\kern-.1667em\lower.7ex\hbox{E}\kern-.125emX}}
\newtheorem{proposition}{Proposition}
\newtheorem{lemma}{Lemma}
\newtheorem{proof}{Proof}
\newtheorem{theorem}{Theorem}
\definecolor{myred}{HTML}{9B0303}
\definecolor{myblue}{HTML}{004873}
\begin{document}
\title{Variational Bayesian Personalized Ranking}
\author{Bin Liu~\textsuperscript{\orcidlink{0000-0002-1011-2909}} , 
	Xiaohong Liu~\textsuperscript{\orcidlink{0000-0001-6377-4730}}, 
        ~\IEEEmembership{Member,~IEEE},
	Qin Luo~\textsuperscript{\orcidlink{0009-0002-1425-4493}} ,
	Ziqiao Shang~\textsuperscript{\orcidlink{0009-0006-0468-7497}} ,
	Jielei Chu~\textsuperscript{\orcidlink{0000-0001-6232-5095}} ,
	Lin Ma~\textsuperscript{\orcidlink{0009-0005-1863-214X}} ,\\
	Zhaoyu Li~\textsuperscript{\orcidlink{0009-0002-1425-4493}} ,
	Fei Teng~\textsuperscript{\orcidlink{0000-0001-9535-7245}},
	Guangtao Zhai~\textsuperscript{\orcidlink{0000-0001-8165-9322}},~\IEEEmembership{Fellow,~IEEE} and
	Tianrui Li~\textsuperscript{\orcidlink{0000-0001-7780-104X}},~\IEEEmembership{Senior Member,~IEEE}
	\thanks{
		Corresponding authors: Xiaohong Liu, Tianrui Li.\\
		Bin Liu, Jielei Chu, Zhaoyu Li, Fei Teng, Tianrui Li are with School of Computing and Artificial Intelligence, Southwest Jiaotong University (SWJTU), Chengdu, China. (e-mail: \{binliu,jieleichu,lizhaoyu,fteng,trli\}@swjtu.edu.cn) 
		
		Xiaohong Liu, Guangtao Zhai are with School of Electronic Information and Electrical Engineering, Shanghai Jiao Tong University (SJTU), Shanghai, China. (e-mail: \{xiaohongliu, zhaiguangtao\}@sjtu.edu.cn) 
		
		Qin Luo, Ziqiao Shang are with School of Electronic Information and Communications, Huazhong University of Science and Technology (HUST), Wuhan, China. (e-mail: luo\_qin@hust.edu.cn, ziqiaosh@gmail.com)
		
		Lin Ma is with School of Transportation and Logistics, Southwest Jiaotong University, Chengdu, China. (e-mail:malin@swjtu.edu.cn)
		
		This research was supported by the National Natural Science Foundation of China (Nos. U2468207, 62572317, 62276218, 62272398 62176221, 61572407, 52472333), the Fundamental Research Funds for the Central Universities No.2682025CX082, Sichuan Science and Technology Program (Nos. 2024NSFTD0036, 2024ZHCG0166), Science and Technology R\&D Program of China Railway Group Limited (2023-Major-23).}} 

\markboth{IEEE TPAMI, 2026. doi:~\url{10.1109/TPAMI.2026.3672705}}{}

\IEEEtitleabstractindextext{%
\justifying

\begin{abstract}
Pairwise learning underpins implicit collaborative filtering, yet its effectiveness is often hindered by sparse supervision, noisy interactions, and popularity-driven exposure bias. In this paper, we propose Variational Bayesian Personalized Ranking (VarBPR),
a tractable variational framework for implicit-feedback pairwise learning that offers principled exposure controllability and theoretical
interpretability. VarBPR reformulates pairwise learning as variational inference over discrete latent indexing variables, explicitly modeling
noise and indexing uncertainty, and divides training into two stages: variational inference, which solve variational posteriors, and variational
learning, which updates model parameters based on these posteriors. In the variational inference stage, we develop a variational
formulation that integrates preference alignment, denoising, and popularity debiasing under a unified ELBO/regularization objective,
deriving closed-form posteriors with clear control semantics: the prior encodes a target exposure pattern, while temperature/regularization
strength controls posterior-prior adherence. As a result, exposure controllability becomes an endogenous and interpretable outcome
of variational inference. In the variational learning stage, we propose a posterior-compression objective that reduces the ideal ELBO’s
computational complexity from polynomial to linear, with the approximation justified by an explicit Jensen-gap upper bound. Theoretically,
we provide interpretable generalization guarantees by identifying a structural error component and revealing the opportunity cost of
prioritizing certain exposure patterns (e.g., long-tail), offering a concrete analytical lens for designing controllable recommender systems.
Empirically, We validate VarBPR across popular backbones; it demonstrates consistent gains in ranking accuracy, enables controlled longtail exposure, and preserves the linear-time complexity of BPR. Code and data are available at: \url{https://github.com/liubin06/VariationalBPR}.

\end{abstract}

\begin{IEEEkeywords}
Bayesian personalized ranking, variational inference, pairwise learning, self-supervised learning, contrastive learning, controllable recommendation. 
\end{IEEEkeywords}}

\maketitle
\IEEEdisplaynontitleabstractindextext
\IEEEpeerreviewmaketitle

\section{Introduction}
\IEEEPARstart{P}{airwise} learning is a fundamental paradigm for implicit collaborative filtering and has been widely adopted across a broad range of applications, including e-commerce, social media, and online entertainment, where user feedback is typically implicit , and personalized content delivery is required~\cite{7839995,Hou:ijcai,lin2024temporally,lai2024knowledge}. 
Implicit feedback (e.g., clicks, purchases, and likes), which represents the most common form of interaction data~\cite{pan2009mind}, provides only indirect preference signals, unlike explicit ratings that directly indicate user preferences. 
As a result, traditional supervised learning paradigms, which minimize discrepancies between predicted and actual ratings, are not directly applicable in this context. 
In response, Bayesian Personalized Ranking (BPR)~\cite{Steffen:2009:UAI} introduced the shift from rating prediction to ranking optimization. 
By maximizing the likelihood of correctly ordered item pairs, BPR optimizes a pairwise ranking objective aligned with the Area Under the Curve (AUC), thereby better matching the training objective to ranking-based evaluation. 
BPR has since become a widely used ranking objective for training recommender models~\cite{Wang:2019:SIGIR,Xiangnan:2020:SIGIR,10158930,ma2024tail} and is often combined with self-supervised learning to further improve performance~\cite{10158930,SSR:2023:TKDE}. 
More recently, pairwise ranking principles related to BPR have informed preference modeling in reinforcement learning from human feedback (RLHF), where they are used to align large language models (LLMs) with human preferences, demonstrating the broad applicability of ranking-based techniques beyond recommender systems~\cite{10.5555/3666122.3668460,Yang2024RegularizingHS}.

\begin{figure}[!t]
	\centering
	\includegraphics[width=0.47\textwidth]{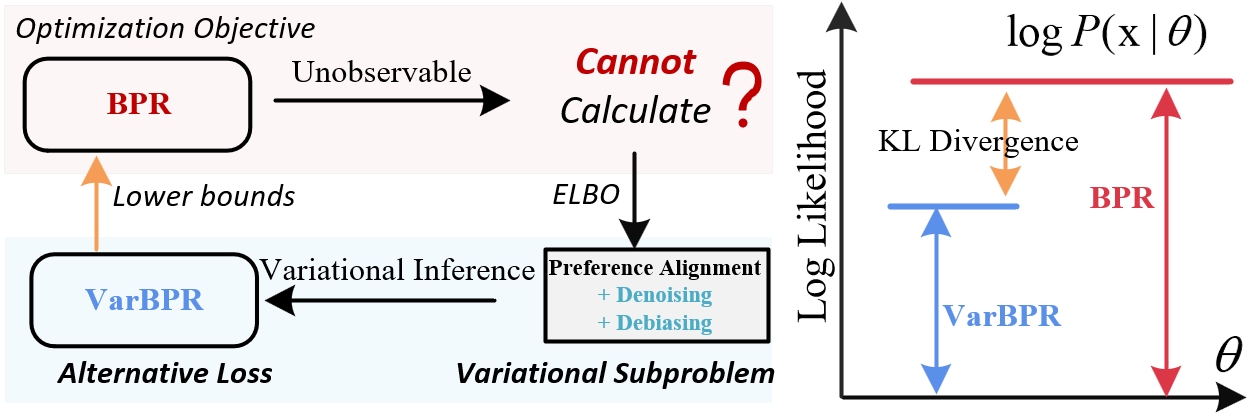}
	\caption{Illustration of VarBPR. Unobservable preferences complicate the likelihood estimation. We establish a unified variational framework that integrates preference alignment, denoising, debiasing from ELBO with closed-form variational posteriors. This yields VarBPR objective for indirect likelihood maximization, featuring analytical inference procedures with explicit exposure  control mechanisms.}
	\label{fig:bound}
\end{figure}

Implicit feedback, as a form of incomplete data~\cite{Dempster:1977:RSS}, allows us to observe only part of user interactions, without revealing their actual preferences. BPR leverages the co-occurrence input~\cite{Liu:2021:TKDE} of user-item interactions to automatically label interacted items as positive examples and non-interacted items as negative examples. By introducing the assumption that users prefer positive items over negative ones (self-supervised signal)~\cite{SSR:2023:TKDE}, BPR constructs ordered pairs and performs maximum likelihood estimation on these observed pairs. This automated labeling reduces the dependence on intensive manual annotation, making it feasible to learn rankings from user behavior data. However, this approach also introduces the problem of false positive (FP) and false negative (FN) samples~\cite{wang2021denoising,Ding:2019:IJCAI,Bin:2023:ICDE,lai2024adaptive}. FP examples occur when users interact with but do not actually like an item (e.g., accidental clicks), while FN examples are items users prefer but have not interacted with yet.

\IEEEpubidadjcol

The inherent limitations of implicit feedback give rise to three fundamental challenges in recommender systems.
First, the parameter estimation challenge stems from the lack of explicit preference labels, which makes likelihood estimation intrinsically ill-posed.
Since implicit feedback provides only partial and exposure-dependent observations of user preference, the resulting missing data mechanism~\cite{Dempster:1977:RSS} complicates accurate estimation of pairwise preference likelihoods.
Second, the noise challenge arises from label contamination, including false positive interactions as well as unobserved but truly preferred false negatives.
Such noise violates the clean pairwise-comparison assumption underlying BPR, distorts the learning trajectory, and ultimately degrades ranking quality~\cite{Bin:2023:ICDE}.
Third, the exposure bias challenge reflects a systematic mismatch between observed interactions and the underlying preference distribution~\cite{chen2023bias,wang2022invariant,chen2021autodebias,lin2024recrec}.
 In particular, the exposure-driven interaction paradigm induces a systemic imbalance where unobserved false negative samples remain statistically underrepresented, while observed interactions become skewed toward a limited subset of dominant popular items. This imbalance initiates a self-reinforcing feedback loop that systematically amplifies popularity bias in recommendation outcomes~\cite{chen2023bias}.

Significant research has been devoted to improving implicit collaborative filtering for accurate preference ranking. 
On the front of parameter estimation, inverse propensity scoring (IPS)~\cite{schnabel2016recommendations} reweighting is a key statistical debiasing approach, aiming to yield unbiased loss estimates. However, their practical efficacy remains contingent upon precise propensity score estimation—a notoriously ill-posed problem that persists as an open challenge~\cite{saito2020asymmetric}. Recent work reveals that slight estimation errors can severely compromise its unbiasedness~\cite{cao2024practically}.
In dealing with noise, noise-label learning methods~\cite{song2022learning} attempt to differentiate clean from noisy samples by exploiting heuristics in neural networks' memorization dynamics~\cite{gao2022self} or gradient-fitting characteristics~\cite{wang2021denoising}. While these approaches modulate training to reduce noise propagation, their performance is often architecture-dependent, limiting their robustness in real-world applications.
Concerning popularity bias, prevalent approaches~\cite{Liupopdcl,zhang2022incorporating} suppress exposure of popular items via adaptive score calibration. However, score calibration is inherently heuristic and post-hoc, providing opaque control over the final exposure profile. 
Together, these limitations motivate the pursuit of a unified and lightweight alternative capable of handling practical challenges in parameter estimation, noise mitigation, and popularity debiasing.

To address these challenges, we propose Variational Bayesian Personalized Ranking (VarBPR). VarBPR reformulates pairwise learning as variational inference over discrete latent indexing variables that explicitly model the inherent noise and indexing uncertainty in implicit feedback, and it optimizes the evidence lower bound (ELBO), as illustrated in Fig.~\ref{fig:bound}.
 This formulation naturally separates training into two stages: variational inference, which solves for the variational posteriors, and variational learning, which updates model parameters given the inferred posteriors. 
Within this inference structure, VarBPR unifies three practical concerns in implicit collaborative filtering---preference alignment, noise reduction, and popularity debiasing (exposure control)---under a single ELBO/regularization objective, yielding closed-form optimal posteriors. Crucially, the variational inference mechanism provides explicit and interpretable control semantics for exposure: the prior encodes the desired exposure pattern, while temperature/regularization hyperparameters continuously control posterior--prior compliance,  enabling controllable long-tail exposure.
To enable large-scale learning, we introduce a posterior-compression objective that reduces the ideal ELBO computation  from polynomial to linear complexity, and we justify the approximation via an explicit Jensen-gap upper bound.
Theoretically, we provide interpretable generalization guarantees by identifying a structural error component induced by variational inference and analyzing its management strategy, thereby revealing the inherent opportunity cost of prioritizing certain exposure patterns. This analysis further yields a unified tuning principle centered on managing this cost and offers a concrete analytical lens for understanding and designing controllable recommender systems. Empirically, we validate VarBPR on several popular backbones, including MF~\cite{Koren:2009:Computer}, LightGCN~\cite{Xiangnan:2020:SIGIR}, and XSimGCL~\cite{10158930}. 
Across these models, VarBPR demonstrates consistent gains in ranking accuracy, enables controlled long-tail exposure.
The key contributions of this paper are as follows:
\vspace{5pt}
\begin{itemize}
\item \textbf{Unified variational framework.}  
We develop a variational inference framework that integrates preference alignment, popularity debiasing (exposure control), and noise reduction, yielding the novel VarBPR as the variational learning objective and providing a robust lower bound for ranking optimization.

\item \textbf{Analytical inference and endogenous exposure control mechanism.}  We derived closed-form posteriors.
Exposure controllability becomes an endogenous and interpretable outcome of the analytical variational inference process: the priors specify the desired exposure pattern, while the temperatures control the strength of posterior adherence to the prior.

\item \textbf{Generalization insight for exposure control.}  
We derive generalization bounds for VarBPR and reveal the inherent “opportunity cost” of prioritizing certain exposure patterns, establishing a unified tuning principle and providing a concrete analytical lens for designing controllable recommender systems.

\item \textbf{Effectiveness, controllability, and scalability.}  
Our experiments on popular backbones show that VarBPR consistently improves ranking accuracy, enables controlled long-tail exposure, and preserves the linear-time complexity of pairwise loss.
\end{itemize}





\section{Related Work}
\subsection{Pairwise Learning for Recommendation}
Early recommender systems were primarily designed to solve the rating prediction problem~\cite{hofmann2004latent,liu2009probabilistic}, where the goal was to predict users' ratings on items based on explicit feedback data, such as star ratings or reviews. While this approach worked well with explicit feedback, it encountered significant challenges when applied to implicit feedback, such as clicks, purchases, or views~\cite{pan2008one}. In these cases, directly optimizing for rating prediction does not accurately capture users' true preferences, as implicit feedback only indicates whether an interaction occurred, without revealing explicit satisfaction levels. To address this limitation, the seminal work on Bayesian Personalized Ranking (BPR)~\cite{Steffen:2009:UAI} introduced pairwise learning for recommendation, reformulating the task from rating prediction to ranking prediction. By focusing on ranking instead of direct prediction, BPR effectively models users' relative preferences between interacted and non-interacted items. This paradigm shift has since dominated the field of implicit collaborative filtering~\cite{yu2020collaborative}, leading many state-of-the-art recommendation models to adopt BPR as their primary optimization objective. Notable examples include NGCF~\cite{Wang:2019:SIGIR}, LightGCN~\cite{Xiangnan:2020:SIGIR}, and XSimGCL~\cite{10158930}. Building on BPR, researchers have proposed various enhancements to improve the model's performance~\cite{he2016vbpr,Yu:2018:CIKM,yu2020collaborative,Bin:TKDE:2024}. For instance, Group Bayesian Personalized Ranking (GBPR)~\cite{Weike:2013:IJCAI} extends BPR by incorporating group preferences, allowing it to better model collective user behavior. Meanwhile, MPR~\cite{Yu:2018:CIKM} introduces the integration of negative feedback to learn more fine-grained user preferences. However, despite these advancements, many of these improvements fail to systematically address the noise introduced by false positive (FP) and false negative (FN) samples. The lack of effective handling of these noisy signals limits the overall accuracy and reliability of the recommendations, particularly in the context of implicit feedback data.

\subsection{Self-supervised Learning for Recommendation} 
Self-supervised learning (SSL)~\cite{BYOL:2020:NIPS,wu:2023:TKDE,gsl:2023:TKDE} has gained significant traction in recommendation systems, giving rise to what is now referred to as Self-Supervised Recommendation (SSR)~\cite{SSR:2023:TKDE}. SSR encompasses a range of techniques, including generative models~\cite{sun2019bert4rec,geng2022recommendation} and contrastive models~\cite{xu2024cmclrec,zhou2024cllp}. Among these, Bayesian Personalized Ranking (BPR) is closely related to contrastive SSR~\cite{Bin:TKDE:2024}. 
When it comes to data labeling, BPR automatically assigns positive or negative labels to items based on the co-occurrence of user-item interactions~\cite{Liu:2021:TKDE}. BPR assumes that users prefer positive items over negative ones, pulling positive items closer to the user (the anchor) while pushing negative items further away~\cite{Bin:TKDE:2024}. This strategy aligns with the optimization of ranking metrics such as the Area Under the Curve (AUC)~\cite{Steffen:2009:UAI} and Normalized Discounted Cumulative Gain (NDCG)~\cite{Jiancan:2022:arxiv}, both of which are critical to the ranking task.
Furthermore, the BPR objective is intrinsically connected to contrastive learning. Specifically, it is equivalent to the InfoNCE~\cite{Oord:2018:arxiv} loss with exactly one negative sample~\cite{Bin:TKDE:2024}. Another prominent approach within SSR is graph contrastive learning~\cite{wu2021self,zhang2024temporal}, which utilizes self-supervised signals derived from slight perturbations in graph structures~\cite{Xia:2022:ICML} to maintain semantic consistency. Despite the distinctiveness of graph contrastive learning, it is often combined with BPR for optimization~\cite{lightgcl:2023:ICLR,10158930}, underscoring the complementary nature of these techniques.
The successful application of BPR in implicit collaborative filtering highlights the value and potential of SSR in enhancing recommendation models. However, the effectiveness of BPR and its extensions is hindered by key challenges in implicit feedback data—namely, sparse supervision, noise, and popularity-induced exposure bias—which continues to be a vital area of research for improving the robustness and performance of these approaches.

\section{Methodology}
\subsection{Pairwise Learning as MLE}
Given a user set \(\mathcal{U}\), an item set \(\mathcal{I}\), and their interactions, the objective of personalized recommendation is to learn an embedding function \(f_\theta\) that maps a user \(u \in \mathcal{U}\) to a user feature representation \(\mathbf{u} = f(u)\) and an item \(i \in \mathcal{I}\) to an item feature representation \(\mathbf{i} = f(i)\). These feature representations should effectively capture the characteristics of the respective user or item. The user's preference for an item is modeled by the similarity between their feature representations. The top-\(k\) recommendations for a user \(u\) are determined by selecting the \(k\) items most similar to \(u\)'s feature representation.

For each user \(u\), let \(\mathcal{I}_u^+\) denote the set of items with which the user has interacted, and \(\mathcal{I}_u^-\) the set of items the user has not interacted with. The training set can then be defined as:
\begin{eqnarray}
   \mathcal{D} = \{(u,i,j) \mid u \in \mathcal{U}, i \in \mathcal{I}_u^+, j \in \mathcal{I}_u^- \}. 
\end{eqnarray}
Each training triplet \((u,i,j)\) represents an observed user preference order \(i >_u j\), indicating that user \(u\) prefers the positive item \(i \in \mathcal{I}_u^+\) over the negative item \(j \in \mathcal{I}_u^-\)~\cite{Steffen:2009:UAI}.

Given the preference order \(i >_u j\), the likelihood is modeled using the Bradley-Terry function~\cite{Bradley:1952:Biometrika}:
\begin{eqnarray}\label{eq:bradlyterry}
    P(i >_u j | \theta) = \sigma(\hat{x}_{ui} - \hat{x}_{uj}),
\end{eqnarray}
where \(\sigma(\cdot)\) is the sigmoid function, and \(\hat{x}_{ui} = \text{sim}(\mathbf{u}, \mathbf{i})\) represents the similarity between the user and the item, typically measured using inner product similarity. The maximum likelihood estimation is then achieved by minimizing the Bayesian Personalized Ranking (BPR) loss:
\begin{align}\label{eq:bpr}
    \mathcal{L}_\text{BPR} =  - \frac{1}{|\mathcal{D}|} \sum_{(u,i,j)} \ln \sigma (\hat{x}_{ui} - \hat{x}_{uj}).
\end{align}
Thus, pairwise learning can be framed as a maximum likelihood estimation (MLE) problem, where the goal is to find the optimal model parameters \(\theta\) that maximize the likelihood of the preference outcome corresponding to  all ordered pairs in the dataset \(\mathcal{D}\). In practice, regularization terms are introduced to prevent overfitting, often interpreted as the logarithm of Gaussian prior densities. Consequently, the BPR estimator in Eq.~\eqref{eq:bpr} can be viewed through the lens of posterior probability. In this context, we do not differentiate between the interpretations of BPR that arise due to the semantics of the regularization term. 
However, the inherent noise in implicit feedback poses a significant challenge to reliable preference learning.

We next revisit pairwise learning from a variational perspective and show how it leads to the proposed VarBPR formulation.

\subsection{Variational Lower Bound}

To account for the noisy nature of implicit feedback, we enrich each BPR triplet \((u,i,j)\) into an enriched interaction 
\(\mathbf{x} = (u,i_{1:M},j_{1:N})\) that denotes the corresponding preference outcome, where \(\{i_1,\dots,i_M\}\subset\mathcal I_u^+\) and \(\{j_1,\dots,j_N\}\subset\mathcal I_u^-\).
For a given enriched interaction \(\mathbf x\), we introduce a discrete latent index variable to explicitly capture the inherent noise and the indexing uncertainty
\begin{equation}
\mathbf{h}=[h^+,h^-],
\end{equation}
where \(h^+\in\{1,\dots,M\}\) indexes the positive sample that best reflects the user's underlying positive interest, and \(h^-\in\{1,\dots,N\}\) indexes the negative sample that best reflects the user's underlying negative interest.

We denote the  prior over \(\mathbf h\) by
\(P(\mathbf h)=P^+(h^+)\,P^-(h^-)\),
and the variational distribution by
\(q(\mathbf h)=q^+(h^+)\,q^-(h^-)\).
The likelihood admits following ELBO-KL decomposition:
\begin{align}
	\ln P(\mathbf{x} \mid \theta)
	&= \sum_\mathbf{h} q(\mathbf{h}) \ln P(\mathbf{x} \mid \theta) \nonumber \\
	&= \sum_\mathbf{h} q(\mathbf{h}) \ln \frac{P(\mathbf{x},\mathbf{h} \mid \theta)}{P(\mathbf{h} \mid \mathbf{x},\theta)} \nonumber \\
	&= \sum_\mathbf{h} q(\mathbf{h}) \ln \frac{P(\mathbf{x},\mathbf{h} \mid \theta)}{q(\mathbf{h})} 
	   + \sum_\mathbf{h} q(\mathbf{h}) \ln \frac{q(\mathbf{h})}{P(\mathbf{h} \mid \mathbf{x},\theta)} \nonumber \\
	&= \mathcal{J}(\theta) + \mathrm{KL}\big(q(\mathbf{h}) \,\Vert\, P(\mathbf{h} \mid \mathbf{x},\theta)\big), \label{eq:elbo-kl}
\end{align}
where
\begin{align}
  \mathcal{J}(\theta)
&= \sum_{\mathbf{h}} q(\mathbf{h}) \ln \frac{P(\mathbf{x},\mathbf{h} \mid \theta)}{q(\mathbf{h})}  \nonumber \\
&= \mathbb E_{q{(\mathbf{h})}} \big[\ln P(\mathbf{x}|\mathbf{h},\theta)\big] + H(q) - H(q,P(\mathbf h)) \label{eq:elbo}
\end{align}
is the evidence lower bound (ELBO). Here \(H(q)\) denotes the entropy of the variational distribution, and
\(H(q,P(\mathbf h))\) denotes the cross-entropy of variational distribution \(q(\mathbf h)\) with respect to the prior \(P(\mathbf h)\).
Since the KL term in Eq.~\eqref{eq:elbo-kl} is nonnegative, \(\mathcal{J}(\theta)\) lower bounds \(\ln P(\mathbf{x}\mid\theta)\) .

We now specify the generative model.
Given positive index variable \(h^+=m\) and negative index variable \(h^-=n\), the latent mapping selects the clean pair \((i_m,j_n)\) from the enriched candidate set $\mathbf{x}$, and the preference outcome is modeled in the same way as BPR:
\begin{small}
\begin{equation}\label{eq:cp}
P(\mathbf{x} \mid \mathbf{h},\theta) 
	= \sigma\Big(\langle \mathbf{u}, \mathbf{i}^+_m\rangle
	                  - \langle \mathbf{u}, \mathbf{j}^-_n\rangle\Big),
\end{equation}
\end{small}
\par\noindent where \(\mathbf u=f_\theta(u)\), \(\mathbf i_m^+=f_\theta(i_m)\), and \(\mathbf j_n^-=f_\theta(j_n)\).

For notational convenience, we represent the variational distribution and the priors in vector form:
\begin{align*}
\boldsymbol{\alpha}&=(\alpha_1,\dots,\alpha_M),\ \alpha_m \triangleq q^+(h^+=m), \\
\boldsymbol{\beta}&=(\beta_1,\dots,\beta_N), ~\beta_n \triangleq q^-(h^-=n),  
\end{align*}
and 
\begin{align*}
   \boldsymbol{\pi}^+&=(\pi_1^+,\dots,\pi_M^+),\ \pi_m^+ \triangleq P^+(h^+=m),\\
\boldsymbol{\pi}^-&=(\pi_1^-,\dots,\pi_N^-),\ \pi_n^- \triangleq P^-(h^-=n). 
\end{align*}

Substituting Eq.~\eqref{eq:cp} into Eq.~\eqref{eq:elbo} and using
\(H(q(\mathbf h))=H(\boldsymbol{\alpha})+H(\boldsymbol{\beta}) \) and
\(H(q(\mathbf h),P(\mathbf h))=H(\boldsymbol{\alpha},\boldsymbol{\pi}^+)+H(\boldsymbol{\beta},\boldsymbol{\pi}^-)\), we obtain the final variational lower bound of the form
\begin{small}
  \begin{align}
\mathcal{J}(\theta)
= \sum_{m=1}^M\sum_{n=1}^N &\alpha_m\beta_n\log \sigma\Big(\langle \mathbf{u}, \mathbf{i}^+_m\rangle - \langle \mathbf{u}, \mathbf{j}^-_n\rangle\Big)\nonumber\\
&+ H(\boldsymbol{\alpha}) - H(\boldsymbol{\alpha},\boldsymbol{\pi}^+) 
+ H(\boldsymbol{\beta}) - H(\boldsymbol{\beta},\boldsymbol{\pi}^-).\label{eq:elbo1}
\end{align}  
\end{small}

\subsection{Variational Inference}
This subsection describes the variational inference step, i.e., how we solve for the variational posteriors
\(\boldsymbol\alpha=(\alpha_1,\dots,\alpha_M)\) and
\(\boldsymbol\beta=(\beta_1,\dots,\beta_N)\). Throughout, \(\boldsymbol\alpha\) and \(\boldsymbol\beta\) lie on the probability simplices
\(\Delta^M=\{\boldsymbol\alpha:\alpha_m\ge 0,\ \sum_{m=1}^M\alpha_m=1\}\) and
\(\Delta^N=\{\boldsymbol\beta:\beta_n\ge 0,\ \sum_{n=1}^N\beta_n=1\}\), respectively.
In principle, variational inference selects \((\boldsymbol\alpha,\boldsymbol\beta)\) by maximizing the ELBO in Eq.~\eqref{eq:elbo1}. However, the ELBO likelihood term
\(\sum_{m,n}\alpha_m\beta_n\log\sigma(\gamma_{mn})\), with margin
\(\gamma_{mn}=\langle\mathbf u,\mathbf i_m\rangle-\langle\mathbf u,\mathbf j_n\rangle\),
couples \(\boldsymbol\alpha\) and \(\boldsymbol\beta\) through the nonlinear \(\log\sigma(\cdot)\) and yields a constrained, non-separable optimization problem. As a result, a direct E-step typically requires an inner-loop iterative solver per enriched interaction, which increases computation and may introduce additional optimization instability.

\textbf{Linearization surrogate}. Our key modeling signal is margin-based ranking: we want to increase user-positive similarity and decrease user-negative similarity, as encoded by the pairwise margin $\gamma_{mn} =\langle \mathbf u,\mathbf i_m \rangle-\langle \mathbf u,\mathbf j_n\rangle $. Since AUC depends only on the ordering induced by \(\gamma\), and \(\log\sigma(\gamma)\) is strictly increasing in \(\gamma\), replacing \(\log\sigma(\gamma_{mn})\) by its linear margin surrogate preserves ranking consistency.More importantly, the log-likelihood admits the Maclaurin expansion
\(\log\sigma(\gamma)= -\log 2 + \gamma/2 + \epsilon(\gamma)\),
with approximation error bounded by \(|\epsilon(\gamma)| \le \gamma^2/8\).
Therefore, maximizing the expected margin captures the first-order ranking signal of the ELBO likelihood term while incurring a controlled second-order approximation error. We therefore adopt the following margin-based variational inference objective.
\begin{small}
\begin{align}\label{eq:marginbased-vi}
\mathcal V(\boldsymbol\alpha,\boldsymbol\beta)
=
&\sum_{m=1}^M\sum_{n=1}^N \alpha_m\beta_n\Big(\langle \mathbf{u}, \mathbf{i}_m\rangle - \langle \mathbf{u}, \mathbf{j}_n\rangle\Big)
\nonumber \\
&+H(\boldsymbol\alpha)-H(\boldsymbol\alpha,\boldsymbol\pi^+)
\;+\;
H(\boldsymbol\beta)-H(\boldsymbol\beta,\boldsymbol\pi^-).
\end{align}   
\end{small}
\par\noindent Crucially, the bilinear summation 
\(\sum_{m}\sum_n\alpha_m\beta_n\big(\langle \mathbf{u}, \mathbf{i}_m\rangle - \langle \mathbf{u}, \mathbf{j}_n\rangle\big)
=\sum_m \alpha_m\langle\mathbf u,\mathbf i_m\rangle-\sum_n \beta_n\langle\mathbf u,\mathbf j_n\rangle\) in Eq.~\eqref{eq:marginbased-vi} separates into a positive term depending only on \(\boldsymbol\alpha\) and a negative term depending only on \(\boldsymbol\beta\).
By further introducing coefficients 
$c_{\text{pos}}, c_{\text{neg}}$ to control the strength of the prior-matching effect, the original problem decomposes into two independent variational subproblems. 

\textbf{Variational subproblems}. On the positive side:
\begin{small}
\begin{equation}\label{eq:var-pos}
\mathcal{V}^+:\max_{\boldsymbol\alpha\in\Delta^M}
  \Big\{
    \underbrace{\sum_{m=1}^M\alpha_m\langle\mathbf{u},\mathbf{i}_m \rangle}_{\text{preference alignment}}
    + \underbrace{c_{\text{pos}}H(\boldsymbol{\alpha})}_{\text{denoising}}
    - \underbrace{c_{\text{pos}}H(\boldsymbol{\alpha}, \boldsymbol{\pi}^+)}_{\text{popularity debiasing}}
  \Big\},
\end{equation}
\end{small}
\par\noindent Variational subproblem~\eqref{eq:var-pos} can be viewed as a regularized maximum alignment problem: the $\sum_{m=1}^M\alpha_m\langle\mathbf{u},\mathbf{i}_m \rangle$ term encourages the variational distribution $\boldsymbol{\alpha}$ to maximize preference alignment. The entropy term \(H(\boldsymbol{\alpha})\) promotes a smooth  assignment of variational distributuion $\boldsymbol{\alpha}$ and
prevents over-reliance on any single, potentially noisy sample
(denoising). While the cross-entropy term
\(H(\boldsymbol{\alpha},\boldsymbol\pi^+)\) compares the variational distribution
\(\boldsymbol\alpha\) with the desired exposure prior \(\boldsymbol\pi^+\).
Because it appears in the objective as
\(-H(\boldsymbol\alpha,\boldsymbol\pi^+)\), it penalizes
assignments that deviate from this exposure profile (popularity debiasing). The hyperparameter \(c_{\text{pos}}>0\) (introduced herein) controls the overall strength of these regularization effects, allowing the inference step to balance these effects in a controlled manner.

On the negative side:
\begin{small}
\begin{equation}\label{eq:var-neg}
\mathcal{V}^-:\max_{\boldsymbol\beta\in\Delta^N}
  \big\{
    \underbrace{-\sum_{n=1}^N\beta_n\langle \mathbf{u}, \mathbf{j}_n \rangle}_{\text{preference alignment}}
    + \underbrace{c_{\text{neg}} H(\boldsymbol{\beta})}_{\text{denoising}}-\underbrace{c_{\text{neg}} H(\boldsymbol{\beta}, \boldsymbol{\pi}^-)}_{\text{popularity debiasing}}
  \big\},
\end{equation}
\end{small}
\par\noindent
Analogously, the $-\sum_{n=1}^N\beta_n\langle \mathbf{u}, \mathbf{j}_n \rangle$ term encourages the variational distribution $\boldsymbol{\beta}$ to maximize preference alignment. \(H(\boldsymbol{\beta})\) encourages a smooth assignment (denoising), while the term
\(-c_{\text{neg}}H(\boldsymbol{\beta},\boldsymbol{\pi}^-)\) penalizes deviations from the suppression prior \(\boldsymbol{\pi}^-\), thereby modulating the suppression pattern. The hyperparameter \(c_{\text{neg}}>0\) (introduced herein) controls the overall strength of regularization effects.

\vspace{5pt}
\begin{lemma}[Variational optimum]\label{lemma:solution}
    The closed-form optima of Eqs.~\eqref{eq:var-pos} and \eqref{eq:var-neg} are
\begin{align}
\alpha_m 
&= \frac{\pi_m ^+\exp\big(\mathbf{u}^\mathsf{T} \mathbf{i}_m / c_{\text{pos}}\big)}
       {\sum_{r=1}^{M} \pi_r ^+\exp\big(\mathbf{u}^\mathsf{T} \mathbf{i}_r / c_{\text{pos}}\big)},\label{eq:var:alpha}\\
\beta_n 
&= \frac{\pi_n ^-\exp\big(-\mathbf{u}^\mathsf{T} \mathbf{j}_n / c_{\text{neg}}\big)}
       {\sum_{r=1}^{N} \pi_r ^-\exp\big(-\mathbf{u}^\mathsf{T} \mathbf{j}_r / c_{\text{neg}}\big)}. \label{eq:var:beta}    
\end{align}
\end{lemma}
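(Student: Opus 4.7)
The two subproblems in Eqs.~\eqref{eq:var-pos} and \eqref{eq:var-neg} have identical structure once we notice that the entropy minus cross-entropy combination is a negative Kullback--Leibler divergence against the prior. I would first rewrite each objective as a linear (affine) term in the decision variable minus a scaled KL regularizer to the prior, and then solve the resulting entropy-regularized linear program over the simplex by a standard Lagrange/KKT argument. The key simplification is that, after this rewrite, each subproblem becomes strictly concave with a smooth barrier at the boundary of the simplex, so the unique maximizer is characterized by a single first-order condition plus the normalization constraint.

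\textbf{Step 1: Rewrite the objectives.} Using \(H(\boldsymbol\alpha)-H(\boldsymbol\alpha,\boldsymbol\pi^+)=-\mathrm{KL}(\boldsymbol\alpha\,\Vert\,\boldsymbol\pi^+)\) and similarly for \(\boldsymbol\beta\), the positive subproblem becomes
\begin{equation*}
\max_{\boldsymbol\alpha\in\Delta^M}\ \sum_{m=1}^M\alpha_m\langle\mathbf u,\mathbf i_m\rangle-c_{\text{pos}}\,\mathrm{KL}(\boldsymbol\alpha\,\Vert\,\boldsymbol\pi^+),
\end{equation*}
and the negative subproblem takes the analogous form with the linear coefficients \(-\langle\mathbf u,\mathbf j_n\rangle\) and regularizer \(c_{\text{neg}}\,\mathrm{KL}(\boldsymbol\beta\,\Vert\,\boldsymbol\pi^-)\). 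Since KL is strictly convex in its first argument and the remaining term is linear, each objective is strictly concave in its variable.

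\textbf{Step 2: Lagrangian and stationary condition.} For the positive side, introduce a Lagrange multiplier \(\lambda\) for the equality constraint \(\sum_m\alpha_m=1\); the non-negativity constraints \(\alpha_m\ge 0\) will turn out to be inactive because the solution lies in the relative interior (the \(-\log\alpha_m\) term in KL acts as a barrier). Setting the partial derivative in \(\alpha_m\) to zero yields
\begin{equation*}
\langle\mathbf u,\mathbf i_m\rangle - c_{\text{pos}}\bigl(\log(\alpha_m/\pi_m^+)+1\bigr) - \lambda = 0,
\end{equation*}
which solves to \(\alpha_m\propto \pi_m^+\exp(\langle\mathbf u,\mathbf i_m\rangle/c_{\text{pos}})\). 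Normalizing by the simplex constraint absorbs \(\lambda\) and gives Eq.~\eqref{eq:var:alpha}.

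\textbf{Step 3: Uniqueness and the negative side.} Strict concavity of the objective guarantees this stationary point is the unique global maximizer on \(\Delta^M\), so the KKT conditions are both necessary and sufficient. The negative subproblem is structurally identical with the sign flip absorbed into the linear coefficient, so the same derivation produces Eq.~\eqref{eq:var:beta} with \(\exp(-\mathbf u^{\mathsf T}\mathbf j_n/c_{\text{neg}})\).

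\textbf{Main obstacle.} This lemma is fundamentally a Gibbs-measure / tilted-distribution calculation and carries no deep technical difficulty; the only subtlety is being careful that the interior-point solution makes the non-negativity constraints inactive (so no additional KKT multipliers survive) and that the concavity argument justifies promoting a stationary point to a global optimum. Both are immediate from the strict convexity of KL and the smoothness of the objective on the open simplex.
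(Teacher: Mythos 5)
Your proposal is correct and follows essentially the same route as the paper: both rewrite the entropy-minus-cross-entropy regularizer as a scaled negative KL to the prior, introduce a single Lagrange multiplier for the normalization constraint, solve the first-order condition to obtain the tilted-prior (Gibbs) form, and normalize; the sign convention on the multiplier is immaterial. Your added remarks on strict concavity guaranteeing global optimality and on the KL barrier keeping the non-negativity constraints inactive are sound refinements that the paper leaves implicit.
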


\begin{proof}
On the positive side, introducing a Lagrange multiplier \(\lambda\) for the variational distribution constraint \(\sum_m \alpha_m = 1\) in Eq.~\eqref{eq:var-pos}, we obtain the Lagrangian
\begin{small}
\begin{eqnarray}
\mathcal V^+
= \sum_{m=1}^M \alpha_m \langle \mathbf u,\mathbf i_m\rangle
  - c_{\text{pos}} \sum_{m=1}^M \alpha_m \log \frac{\alpha_m}{\pi_m^+}
  + \lambda\Big(\sum_{m=1}^M \alpha_m - 1\Big). \nonumber
\end{eqnarray}
\end{small}
Taking derivatives with respect to each \(\alpha_m\) and setting them to zero gives
\[
\frac{\partial \mathcal V^+(\boldsymbol\alpha,\lambda)}{\partial \alpha_m}
= \langle \mathbf u,\mathbf i_m\rangle
  - c_{\text{pos}}\Big(\log\frac{\alpha_m}{\pi_m^+} + 1\Big)
  + \lambda = 0.
\]
Rearranging yields
\[
\log\frac{\alpha_m}{\pi_m^+}
= \frac{1}{c_{\text{pos}}}\big(\langle \mathbf u,\mathbf i_m\rangle + \lambda - c_{\text{pos}}\big),
\]
where \(\lambda - c_{\text{pos}}\) is independent of \(m\).
Exponentiating yields:
\[\alpha_m \propto \pi_m ^+\exp\big(\frac{\mathbf{u}^\mathsf{T}\mathbf{i}_m }{c_{\text{pos}}}\big) \]
Normalizing over \(m\) yields the claimed equality in Eq.~\eqref{eq:var:alpha}. Symmetrically, the same procedure applied to the negative side is omitted for brevity.
\end{proof}


\textbf{Tractable inference.}  
Lemma~\ref{lemma:solution} shows that the variational posteriors \((\boldsymbol\alpha,\boldsymbol\beta)\) admit closed-form solutions. This analytical form provides three key advantages: (i) \textit{interpretability}, as it allows for a direct understanding of the posterior derivation and the role of each parameter in the inference process, thereby providing an analytical interface to the inference procedure; (ii) \textit{computational efficiency}, by enabling constant-time inference; and (iii) \textit{numerical stability}, ensuring reliable posterior estimates without the convergence risks inherent in iterative optimization methods. Moreover, when solving Eqs.~\eqref{eq:var:alpha}--\eqref{eq:var:beta}, no additional bag-wise normalization of $(\boldsymbol{\pi}^+,\boldsymbol{\pi}^-)$ is required, since $(\boldsymbol{\alpha},\boldsymbol{\beta})$ are normalized over the simplex, and their optimal solutions remain invariant under any common scaling of the constrained prior within a bag.

\textbf{Exposure control mechanism.}  
The variational posteriors \(\boldsymbol\alpha\) and \(\boldsymbol\beta\) represent the calibrated preference signals, refined through debiasing and denoising regularization. The analytical nature of the inference process makes exposure controllability an endogenous and interpretable outcome.
(i) The priors serve as \textit{policy-shape knobs}:  
\(\boldsymbol{\pi}^+\) specifies the desired exposure profile for the positive side (e.g., promoting diversity or category balance), shaping how \(\boldsymbol{\alpha}\) allocates mass across the \(M\) positive items.  
\(\boldsymbol{\pi}^-\) defines the intended suppression profile for the negative side (e.g., suppressing popular items or low-quality content), shaping how \(\boldsymbol{\beta}\) allocates mass over the \(N\) negative samples.  
Thus, the priors \((\boldsymbol{\pi}^+, \boldsymbol{\pi}^-)\) enable practitioners to instantiate task-specific exposure and suppression targets based on available signals, such as category, quality, novelty, and popularity. A simple example of prior encoding is provided in Eq.~\eqref{eq:prior1} and ~\eqref{eq:prior2}.
(ii) The temperature coefficients \(c_{\mathrm{pos}}\) and \(c_{\mathrm{neg}}\) act as \textit{strength knobs}, controlling the degree to which the posteriors adhere to the corresponding priors during inference. When \(c_{\mathrm{pos}}\) and \(c_{\mathrm{neg}}\) are small, inference is more influenced by preference-alignment signals, resulting in \textit{preference-driven inference}; when they are large, the priors exert greater influence, pulling the posteriors closer to the prescribed priors, resulting in \textit{policy-driven inference}.  
Consequently, VarBPR implements exposure control in an interpretable and analytically transparent manner: the priors specify \textit{what} pattern is desired, while the temperatures specify \textit{how strongly} the posterior should conform to it.

\subsection{Variational Learning Objective}
With the variational inference step completed, each enriched interaction
\(\mathbf x=(u,i_{1:M},j_{1:N})\) is associated with fixed mean-field variational
posteriors \(\boldsymbol\alpha\in\Delta^M\) and
\(\boldsymbol\beta\in\Delta^N\).
We now describe the variational learning stage (M-step), where we optimize the
ranking parameters \(\theta\) while keeping \((\boldsymbol\alpha,\boldsymbol\beta)\) fixed.
Since the priors \(\boldsymbol\pi^+,\boldsymbol\pi^-\)
are independent of \(\theta\), and \((\boldsymbol\alpha,\boldsymbol\beta)\) are fixed in this stage,
the regularization term in Eq.~\eqref{eq:elbo1} is constant w.r.t.\ \(\theta\)
and can be omitted.
Therefore, the \(\theta\)-dependent ELBO contribution for a single enriched interaction reduces to
\begin{equation}\label{eq:ideal-elbo2}
\mathcal J
=\sum_{m=1}^M\sum_{n=1}^N \alpha_m\beta_n\log \sigma\Big(\langle \mathbf{u}, \mathbf{i}^+_m\rangle - \langle \mathbf{u}, \mathbf{j}^-_n\rangle\Big)
\end{equation}
Evaluating \eqref{eq:ideal-elbo2} na\"ively scales as \(\mathcal{O}(MN)\) per interaction, which is undesirable
in large-scale recommendation settings. Next we show how to perform posterior compression via plug-in summarization.

\textbf{Posterior compression}. Let \((m,n)\sim(\boldsymbol\alpha,\boldsymbol\beta)\) and define the induced random margin
\(\Gamma \triangleq \gamma_{mn}(\theta)\), where
\(\gamma_{mn}(\theta)=\langle \mathbf{u}, \mathbf{i}^+_m\rangle - \langle \mathbf{u}, \mathbf{j}^-_n\rangle\).
Then the summarized margin admits a separable form:
\begin{equation}\label{eq:mu-separable}
\mathbb E\!\left[\Gamma\right]
= \langle \mathbf u,\mathbf c_u^+\rangle-\langle \mathbf u,\mathbf c_u^-\rangle,
\end{equation}
where the variational posterior-summarized centers are
\begin{equation}\label{eq:centers}
\mathbf c_u^+ \triangleq \sum_{m=1}^M \alpha_m \mathbf i_m^+,
\qquad
\mathbf c_u^- \triangleq \sum_{n=1}^N \beta_n \mathbf j_n^-.
\end{equation}
Importantly, Eq.~\eqref{eq:mu-separable} reduces the complexity of the summarized margin to \(\mathcal O(M+N)\). We next investigate the error bound arising from approximating the ELBO term $\mathbb{E}[\ell(\Gamma)]$ with the summarized margin $\ell\mathbb{E}(\Gamma)$, where $\ell(\gamma) \triangleq \log\sigma(\gamma)$.

\begin{proposition}[Controlled Jensen gap]\label{prop:jensen-gap}
Let  \(\Gamma=\gamma_{mn}(\theta)\) with \((m,n)\sim(\boldsymbol\alpha,\boldsymbol\beta)\).
If \(\mathrm{Var}(\Gamma)<\infty\), then
\begin{equation}\label{eq:jensen-gap}
\ell\!\big(\mathbb E[\Gamma]\big)-\tfrac18\,\mathrm{Var}(\Gamma)
\le
\mathbb E[\ell(\Gamma)]
\le
\ell\!\big(\mathbb E[\Gamma]\big).
\end{equation}
\end{proposition}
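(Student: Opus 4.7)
The plan is to establish the two inequalities separately, using only basic properties of $\ell(\gamma)=\log\sigma(\gamma)$: namely its concavity and a uniform bound on its second derivative.

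\textbf{Upper bound.} I would first compute $\ell'(\gamma)=1-\sigma(\gamma)$ and $\ell''(\gamma)=-\sigma(\gamma)\bigl(1-\sigma(\gamma)\bigr)\le 0$, so that $\ell$ is concave on $\mathbb{R}$. A direct application of Jensen's inequality then yields $\mathbb{E}[\ell(\Gamma)]\le \ell(\mathbb{E}[\Gamma])$, which is the right half of \eqref{eq:jensen-gap}.

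\textbf{Lower bound.} The key quantitative ingredient is that the logistic variance $\sigma(\gamma)\bigl(1-\sigma(\gamma)\bigr)$ attains its maximum $1/4$ at $\gamma=0$, so that $|\ell''(\gamma)|\le 1/4$ uniformly in $\gamma$. I would then present two equivalent routes and keep whichever reads more cleanly. Route (a): apply a second-order Taylor expansion around $\mu\triangleq\mathbb{E}[\Gamma]$ with Lagrange remainder,
\[
\ell(\Gamma)=\ell(\mu)+\ell'(\mu)(\Gamma-\mu)+\tfrac12 \ell''(\xi)(\Gamma-\mu)^2,
\]
take expectations (the linear term vanishes), and use $\ell''(\xi)\ge -1/4$ together with the finite-variance hypothesis to conclude
\[
\mathbb{E}[\ell(\Gamma)]\ge \ell(\mu)-\tfrac18\,\mathrm{Var}(\Gamma).
\]
Route (b), which avoids invoking a mean-value $\xi$: observe that the auxiliary function $g(\gamma)\triangleq \ell(\gamma)+\gamma^2/8$ has $g''(\gamma)=\ell''(\gamma)+1/4\ge 0$, hence is convex; applying Jensen to $g$ gives $\mathbb{E}[g(\Gamma)]\ge g(\mu)$, which rearranges to the same bound after subtracting $\mathbb{E}[\Gamma^2]/8$ and using $\mathrm{Var}(\Gamma)=\mathbb{E}[\Gamma^2]-\mu^2$.

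\textbf{Anticipated obstacle.} There is no serious technical obstruction: both directions reduce to standard convex-analytic facts once the uniform bound $|\ell''|\le 1/4$ is in hand. The only care needed is verifying that the Taylor-remainder expectation is well defined, which is precisely what the hypothesis $\mathrm{Var}(\Gamma)<\infty$ guarantees; route (b) sidesteps this issue entirely by using a single application of Jensen to a convex function, and is likely the cleanest presentation for the paper.
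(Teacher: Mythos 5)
Your proposal is correct and Route (a) is essentially identical to the paper's own proof: concavity of $\ell$ (via $\ell''(\gamma)=-\sigma(\gamma)\sigma(-\gamma)\in[-\tfrac14,0]$) gives the upper bound by Jensen, and a second-order Taylor expansion with $\ell''(\xi)\ge-\tfrac14$ gives the lower bound. Your Route (b), applying Jensen to the convex function $\gamma\mapsto\ell(\gamma)+\gamma^2/8$, is a clean minor variant that avoids the mean-value point, but it is not a substantively different argument.
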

\noindent \textit{Proof sketch.}
Since \(\ell''(\gamma)=-\sigma(\gamma)\sigma(-\gamma)\in[-\tfrac14,0]\), \(\ell\) is concave, hence
\(\mathbb E[\ell(\Gamma)]\le \ell(\mathbb E[\Gamma])\) by Jensen's inequality.
For the lower bound, apply a second-order Taylor expansion of \(\ell(\Gamma)\) around \(\mathbb E[\Gamma]\):
\(\ell(\Gamma)=\ell(\mathbb E[\Gamma])+\ell'(\mathbb E[\Gamma])(\Gamma-\mathbb E[\Gamma])+\tfrac12 \ell''(\xi)(\Gamma-\mathbb E[\Gamma])^2\),
where \(\xi\) lies between \(\Gamma\) and \(\mathbb E[\Gamma]\). Taking expectations, the linear term vanishes and
\(\ell''(\xi)\ge -\tfrac14\) yields
\(\mathbb E[\ell(\Gamma)]\ge \ell(\mathbb E[\Gamma])-\tfrac18\mathbb E[(\Gamma-\mathbb E[\Gamma])^2]\),
which gives Eq.~\eqref{eq:jensen-gap}.

Proposition~\ref{prop:jensen-gap} shows that the plug-in summarization
\(\mathbb E[\ell(\Gamma)] \approx \ell(\mathbb E[\Gamma])\) incurs a second-order controlled error governed by
\(\mathrm{Var}(\Gamma)\).
Motivated by this bound, we introduce a linearly scaled objective as
\begin{equation}\label{eq:vbpr-per}
\tilde{\mathcal J}
\triangleq \ell(\mathbb E[\Gamma])=
\log\sigma\!\Big(\langle \mathbf u,\mathbf c_u^+\rangle-\langle \mathbf u,\mathbf c_u^-\rangle\Big),
\end{equation}

Eq.~\eqref{eq:vbpr-per} introduces a deployment-friendly plug-in approximation: it replace the expected log-likelihood \(\mathbb E[\ell(\gamma)]\) by its summarized form \(\ell(\mathbb E[\Gamma])\),
incurring a controlled second-order error governed by the posterior margin variance, while reduces the per-interaction computation from \(\mathcal O(MN)\) to \(\mathcal O(M+N)\).

Aggregating over the enriched training set \(\mathcal S\) and minimizing the negative average yields the
Variational BPR objective of the empirical form:
\begin{equation}\label{eq:vbpr-emp}
 \hat{\mathcal L}_{\mathrm{VarBPR}}(\theta)
=
-\frac{1}{|\mathcal S|}
\sum_{\mathbf x\in\mathcal S}
\log\sigma\!\Big(\langle \mathbf u,\mathbf c_u^+\rangle-\langle \mathbf u,\mathbf c_u^-\rangle\Big),
\end{equation}
where \((\boldsymbol\alpha,\boldsymbol\beta)\) are fixed during  learning stage, while \(\mathbf c_u^\pm\) still depend on \(\theta\)
through the encoder outputs in Eq.~\eqref{eq:centers}.

\subsection{Implementation}
VarBPR requires $M$ positive and $N$ negative items per user for variational inference. 
Following the BPR data format, we organize each training instance as $(u,i_{1:M},j_{1:N})$, which can be easily implemented by rewriting the \verb|collate_fn| in the \verb|DataLoader|. 
Algorithm~\ref{alg:algorithm} summarizes the training procedure for a single epoch. (i) extract user/item representations; (ii) solve the variational posteriors $\boldsymbol{\alpha,\beta}$ via Eqs.~\eqref{eq:var:alpha}--\eqref{eq:var:beta}; and (iii) compute the VarBPR loss using Eq.~\eqref{eq:vbpr-emp}.

\begin{algorithm}[!h]
	\caption{Pseudocode of VarBPR}
	\label{alg:algorithm}
	\textbf{Input}: Training set $\mathcal{S}$ with entries $\mathbf{x}=(u,i_{1:M},j_{1:N})$, encoder $f_\theta$, prior $(\boldsymbol{\pi}^+,\boldsymbol{\pi}^-)$, scaling factors $(c_{\mathrm{pos}},c_{\mathrm{neg}})$, learning rate $\eta$.
 \\
	\textbf{Output}: $\theta$. 
	\begin{algorithmic}[1] 
		\STATE Randomly initialize $\theta$.
        \FOR{each $\mathbf{x} \in \mathcal{S}$}
        \STATE \textbf{\textcolor{lightgray}{// Feature Extraction}}
		\STATE $\mathbf{u} \leftarrow f_\theta(u)$.
		\STATE $\mathbf i_{1},\cdots, \mathbf i_{M} \leftarrow f_\theta(i_{1}), \cdots, f_\theta(i_M)$.
		\STATE $\mathbf j_{1},\cdots, \mathbf j_{N} \leftarrow f_\theta(j_1), \cdots, f_\theta(j_N)$.
        \STATE \textbf{\textcolor{lightgray}{// Variational Inference}}
        \STATE Solve variational posteriors $\boldsymbol{\alpha,\beta}$ via Eqs.~\eqref{eq:var:alpha}-\eqref{eq:var:beta}.\label{algo:vi}
        \STATE \textbf{\textcolor{lightgray}{// Variational Learning}}
		\STATE Calculate VarBPR via Eq.~\eqref{eq:vbpr-emp}.\textcolor{lightgray}{\textbf{//} or Eq.\eqref{eq:ideal-elbo2}}\label{algo:loss}
		\STATE $\theta \leftarrow \theta - \eta\nabla_\theta  \hat{\mathcal L}_{\mathrm{VarBPR}}(\theta)$.
        \ENDFOR
    \STATE \textbf{Return} $\theta$.
	\end{algorithmic}
\end{algorithm}

\textit{Complexity Analysis}: For each data entry, VarBPR encodes one user along with \(M+N\) items, and computes scores for \(M\) positive and \(N\) negative samples. The variational inference (line~\ref{algo:vi}), loss calculation (line~\ref{algo:loss}), and even the prior construction for encoding hardness all reuse the scores obtained from these \(M+N\) items. Therefore, aside from feature extraction, the additional computational overhead is virtually negligible. In contrast, BPR only requires encoding one user and two items. Thus, the main overhead of VarBPR lies in the feature extraction step, which processes \(M+N\) items. Nevertheless, this additional computation remains strictly linear in time complexity, consistent with the computational overhead comparisons reported in the Section~\ref{sec:time}.

\section{Analysis of Variational BPR}
\subsection{Generalization Analysis}
Let \(\mathcal{X}\) be the space of interactions \(\mathbf x\), and suppose training samples \(\mathcal S=\{\mathbf{x}_t\}_{t=1}^n\) are drawn i.i.d.\ from some unknown distribution \(\mathcal{P}\) over \(\mathcal{X}\).  
Given $\theta$, the VarBPR margin on $\mathbf x$ is
\[
\gamma_{\text{v}}(\mathbf x;\theta)
=
\langle \mathbf u,\mathbf c_u^+(\mathbf x)\rangle
-
\langle \mathbf u,\mathbf c_u^-(\mathbf x)\rangle,
\]
where $\mathbf{u}$, $\mathbf{c}_u^{+}(\mathbf{x})=\sum\alpha_m(\mathbf{x})\,\mathbf{i}_m$ and $\mathbf{c}_u^{-}(\mathbf{x})=\sum\beta_n(\mathbf{x})\,\mathbf{j}_n$, are all induced by $f_\theta$, and the pairwise logistic loss is $\,\ell(\gamma) = -\log\sigma(\gamma)$. So $\ell'(\gamma)=\sigma(\gamma)-1\in[-1,0]$, hence $\ell$ is $1$-Lipschitz in $\gamma$. We define the population and empirical VarBPR risks as
\[
\mathcal L_{\text{v}}(\theta)
=
\mathbb E_{\mathbf x\sim\mathcal P}\big[\ell(\gamma_\text{v}(\mathbf x;\theta))\big],
\qquad
\widehat{\mathcal L}_{\text{v}}(\theta)
=
\frac1n \sum_{t=1}^n \ell(\gamma_\text{v}(\mathbf x_t;\theta)).
\]

Let the function class be
\[
\mathcal G
=
\big\{
  g_\theta(\mathbf x) := \ell(\gamma_{\mathrm v}(\mathbf x;\theta))
  \,:\, \theta\in\Theta
\big\}.\]
Without loss of generality, we assume that user and item embeddings are normalized.
Since $\mathbf{c}_u^{\pm}$ are convex combinations of unit-norm item embeddings, $\|\mathbf{c}_u^{\pm}\| \leq 1$, so that
$|\gamma(\mathbf x;\theta)| \le 2$ for all $\mathbf x$ and $\theta$, and hence the loss is uniformly bounded:
$0 \le g_\theta(\mathbf x) = \ell(\gamma_{\mathrm v}(\mathbf x;\theta))
\le B_\ell := \log(1+e^{2}),
\quad \forall\, g_\theta\in\mathcal G.$

Given training set $S = \{\mathbf x_1,\dots,\mathbf x_n\}$, its empirical Rademacher complexity is
\[
\widehat{\mathfrak R}_S(\mathcal G)
=
\mathbb E_{\boldsymbol\sigma}
\Big[
  \sup_{\theta \in\Theta}
  \frac1n \sum_{t=1}^n \sigma_t g_\theta (\mathbf x_t)
\Big],
\]
where $\sigma_t$ are i.i.d.\ Rademacher random variables taking values $\pm1$ with equal probability.
The population Rademacher complexity is
\[
\mathfrak R_n(\mathcal G)
=
\mathbb E_{S\sim(\mathcal P)^n}
\big[
  \widehat{\mathfrak R}_S(\mathcal G)
\big].
\]

We further introduce the \textit{clean-center risk} as an idealized objective, evaluated using an unobservable clean-center pair associated with each enriched interaction $\mathbf x$:
\[
\mathcal L^*(\theta)=\mathbb E_{\mathbf x\sim\mathcal P}\big[\ell(\gamma^*(\mathbf x;\theta))\big],
\]
where $\gamma^*(\mathbf x;\theta)
=\langle \mathbf u,\mathbf c_u^{*+}(\mathbf x)\rangle-\langle \mathbf u,\mathbf c_u^{*-}(\mathbf x)\rangle$ is the clean margin evaluated on clean centers:
\begin{align*}
\mathbf c_u^{*+}(\mathbf x)
=\sum\alpha_m^*(\mathbf x)\,\mathbf i_m,\qquad
\mathbf c_u^{*-}(\mathbf x)=\sum\beta_n^*(\mathbf x)\,\mathbf j_n.
\end{align*}
Here $\boldsymbol\alpha^*(\mathbf x) \in \Delta^M$ and $\boldsymbol\beta^*(\mathbf x)\in\Delta^N$ denote oracle variational posteriors over bag indices that  ideally place their probability mass on the clean samples. Because the variational family spans the entire simplex over bag indices, we can equivalently view $(\boldsymbol\alpha^*,\boldsymbol\beta^*)$ as the optima of the same variational subproblems in Eqs.~\eqref{eq:var-pos} and \eqref{eq:var-neg} under oracle variational-inference parameters $(c_\text{pos}^*,c_\text{neg}^*)$ and oracle priors $(\boldsymbol{\pi}^{*+},\boldsymbol{\pi}^{*-})$.

Variational inference can then be viewed as an indexing map that projects a noisy bag of samples to a surrogate center approximating this ideal target. Next, we establish a generalization bound that relates the empirical VarBPR risk to this ideal \textit{clean-center} objective.

\vspace{5pt}
\begin{theorem}[Generalization bound]\label{thm:generalization}
Let \(\mathcal{G}\) be the function class induced by the encoder \(f_\theta\) and the scoring margin \(\gamma\).  For any \(\delta \in (0,1)\), with probability at least \(1-\delta\),
\begin{align}
\mathcal{L}^*(\theta) - \widehat{\mathcal{L}}_\text{v}(\theta)
\le
\mathcal C_\mathsf{KL}+2\mathfrak R_n(\mathcal G)+B_\ell \sqrt{\frac{\ln(1/\delta)}{2n}},
\end{align}
where \(\mathcal{C}_{\mathsf{KL}} = c_{\text{pos}}^{*} \mathbb{E}_\mathbf x \mathrm{KL}\left(\boldsymbol{\alpha} \Vert \boldsymbol{\pi}^{*+}\right)
          + c_{\text{neg}}^{*} \mathbb{E}_\mathbf x \mathrm{KL}\left(\boldsymbol{\beta} \Vert \boldsymbol{\pi}^{*-}\right)\) 
          is the expected posterior--oracle-prior KL penalty with fixed constants $c_{\mathrm{pos}}^*,c_{\mathrm{neg}}^*$.
\end{theorem}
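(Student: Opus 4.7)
I would split the bound into a \emph{structural} gap and a \emph{statistical} gap via the triangle inequality
\begin{equation*}
\mathcal L^*(\theta)-\widehat{\mathcal L}_{\mathrm v}(\theta)
=\underbrace{\big[\mathcal L^*(\theta)-\mathcal L_{\mathrm v}(\theta)\big]}_{\text{structural}}
+\underbrace{\big[\mathcal L_{\mathrm v}(\theta)-\widehat{\mathcal L}_{\mathrm v}(\theta)\big]}_{\text{statistical}},
\end{equation*}
and bound the two pieces by the KL penalty $\mathcal C_{\mathsf{KL}}$ and by the Rademacher/McDiarmid terms, respectively.

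\textbf{Step 1 (structural gap via variational optimality).} Using $H(\boldsymbol\alpha)-H(\boldsymbol\alpha,\boldsymbol\pi^+)=-\mathrm{KL}(\boldsymbol\alpha\Vert\boldsymbol\pi^+)$, I rewrite each side of the variational objective~\eqref{eq:marginbased-vi} under the \emph{oracle} parameters $(c_{\mathrm{pos}}^*,c_{\mathrm{neg}}^*,\boldsymbol\pi^{*+},\boldsymbol\pi^{*-})$ as
\begin{equation*}
\mathcal F^*(\boldsymbol\alpha,\boldsymbol\beta;\mathbf x,\theta)
=\langle\mathbf u,\mathbf c_u^+\rangle-\langle\mathbf u,\mathbf c_u^-\rangle
-c_{\mathrm{pos}}^*\mathrm{KL}(\boldsymbol\alpha\Vert\boldsymbol\pi^{*+})
-c_{\mathrm{neg}}^*\mathrm{KL}(\boldsymbol\beta\Vert\boldsymbol\pi^{*-}).
\end{equation*}
Since $(\boldsymbol\alpha^*,\boldsymbol\beta^*)$ are by definition the maximizers of $\mathcal F^*$ on $\Delta^M\times\Delta^N$, evaluating $\mathcal F^*$ at the learned posteriors $(\boldsymbol\alpha,\boldsymbol\beta)$ and using non-negativity of KL yields
\begin{equation*}
\gamma_{\mathrm v}(\mathbf x;\theta)-\gamma^*(\mathbf x;\theta)
\le c_{\mathrm{pos}}^*\mathrm{KL}(\boldsymbol\alpha\Vert\boldsymbol\pi^{*+})
+c_{\mathrm{neg}}^*\mathrm{KL}(\boldsymbol\beta\Vert\boldsymbol\pi^{*-}).
\end{equation*}
Because $\ell(\gamma)=-\log\sigma(\gamma)$ is $1$-Lipschitz and non-increasing in $\gamma$, we always have $\ell(\gamma^*)-\ell(\gamma_{\mathrm v})\le(\gamma_{\mathrm v}-\gamma^*)_+$, so the displayed KL bound transfers pointwise to the losses. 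Taking expectation under $\mathcal P$ gives $\mathcal L^*(\theta)-\mathcal L_{\mathrm v}(\theta)\le \mathcal C_{\mathsf{KL}}$.

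\textbf{Step 2 (statistical gap via Rademacher complexity).} For the remaining term, I would invoke the standard uniform-convergence result for a $[0,B_\ell]$-bounded function class: with probability at least $1-\delta$,
\begin{equation*}
\sup_{g_\theta\in\mathcal G}\big[\mathbb E_{\mathcal P}g_\theta-\tfrac1n\sum_{t=1}^n g_\theta(\mathbf x_t)\big]
\le 2\mathfrak R_n(\mathcal G)+B_\ell\sqrt{\tfrac{\ln(1/\delta)}{2n}}.
\end{equation*}
This follows from McDiarmid's inequality (using that the loss is bounded by $B_\ell=\log(1+e^2)$, which itself uses the normalization assumption and the fact that $\mathbf c_u^\pm$ lie in the convex hull of unit-norm embeddings, so $|\gamma_{\mathrm v}|\le 2$) and the standard symmetrization step. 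Applying this uniform bound at the particular $\theta$ in question controls $\mathcal L_{\mathrm v}(\theta)-\widehat{\mathcal L}_{\mathrm v}(\theta)$ by $2\mathfrak R_n(\mathcal G)+B_\ell\sqrt{\ln(1/\delta)/(2n)}$. Summing the two bounds completes the proof.

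\textbf{Main obstacle.} Step 2 is essentially a textbook application of the bounded-loss Rademacher bound, so the real subtlety lies in Step 1: justifying that the joint variational optimality of $(\boldsymbol\alpha^*,\boldsymbol\beta^*)$ produces a \emph{one-sided} inequality in the correct direction after the $\ell$-composition. The key is to exploit both the decoupling of $\mathcal F^*$ across the positive and negative sides (which permits applying optimality to each simplex independently, as in the proof of Lemma~\ref{lemma:solution}) and the monotonicity of $\ell$, so that the sign of the Jensen-type gap $\ell(\gamma^*)-\ell(\gamma_{\mathrm v})$ aligns with $(\gamma_{\mathrm v}-\gamma^*)_+$ and is absorbed into the non-negative KL penalty. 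Once this alignment is established, dropping the $\mathrm{KL}(\boldsymbol\alpha^*\Vert\boldsymbol\pi^{*+})+\mathrm{KL}(\boldsymbol\beta^*\Vert\boldsymbol\pi^{*-})$ contribution yields the clean form $\mathcal C_{\mathsf{KL}}$ stated in the theorem.
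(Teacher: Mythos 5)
Your proposal is correct and follows essentially the same route as the paper: the same decomposition into a structural (indexing) gap bounded by oracle variational optimality plus monotonicity/Lipschitzness of $\ell$, and a statistical gap bounded via McDiarmid's inequality and symmetrization. The only cosmetic difference is that you apply optimality once to the joint oracle objective $\mathcal F^*$ while the paper treats the positive and negative simplices separately before combining; these are equivalent because the objective is separable.
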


\begin{proof}
We decompose the gap between the ideal risk and the empirical VarBPR risk as
\begin{align}
\mathcal L^*(\theta) - \widehat{\mathcal L}_{\mathrm{v}}(\theta)
=\underbrace{\mathcal L^*(\theta)- \mathcal L_{\mathrm{v}}(\theta)}_{\text{indexing error}}
+
\underbrace{ \mathcal L_{\mathrm{v}}(\theta) - \widehat{\mathcal L}_{\mathrm{v}}(\theta)}_{\text{generalization error}}.\label{eq:error-decomposition}
\end{align}

\textbf{(i) Indexing error.} 
We first bound the structural mismatch between the clean-center objective and the VarBPR objective through a concise variational argument. On the positive side, consider an oracle poster $\boldsymbol\alpha^*$ over bag indices that ideally place their probability mass on clean samples. Because the variational family spans the full simplex, we can equivalently view $\boldsymbol\alpha^*$ as the optima of an oracle variational problem in the same form as Eq.~\eqref{eq:var-pos}, but parameterized by oracle variational-inference parameter $c_\text{pos}^*$ and oracle prior $\boldsymbol{\pi}^{*+}$, for which the oracle variational posterior
$\boldsymbol{\alpha}^*$ is the unique maximizer. Plugging the VarBPR variational posterior $\boldsymbol{\alpha}$ as a feasible candidate into this oracle variational problem gives
\[
\mathcal{V}^{+}(\boldsymbol{\alpha}^*)
\;\ge\;
\mathcal{V}^{+}(\boldsymbol{\alpha}).
\]
Substituting $\sum_m\alpha_m\langle\mathbf{u},\mathbf{i}_m \rangle = \langle\mathbf{u},\mathbf{c}_u^+ \rangle$ and expanding the regularizer as a KL divergence yields
\begin{equation}
\langle \mathbf u,\mathbf c_u^{*+}\rangle-\langle \mathbf u,\mathbf c_u^{+}\rangle
\;\ge\;
c_{\text{pos}}^*\big(
\mathrm{KL}(\boldsymbol\alpha^*\Vert\boldsymbol\pi^{*+})
-
\mathrm{KL}(\boldsymbol\alpha\Vert\boldsymbol\pi^{*+})
\big).
\label{eq:pos-gap-clean}
\end{equation}

Analogously, on the negative side we consider an oracle variational subproblem
of the same form as Eq.~\eqref{eq:var-neg}, parameterized by oracle variational-inference parameters $c_\text{neg}^*$ and oracle prior $\boldsymbol{\pi}^{*-}$. Using the VarBPR posterior
$\boldsymbol{\beta}$ as a feasible candidate yields
\begin{equation}
\langle \mathbf u,\mathbf c_u^{*-}\rangle-\langle \mathbf u,\mathbf c_u^{-}\rangle
\;\le\;
c_{\text{neg}}^*\big(
\mathrm{KL}(\boldsymbol\beta\Vert\boldsymbol\pi^{*-})
-
\mathrm{KL}(\boldsymbol\beta^*\Vert\boldsymbol\pi^{*-})
\big).
\label{eq:neg-gap-clean}
\end{equation}

Combining \eqref{eq:pos-gap-clean}–\eqref{eq:neg-gap-clean} gives a lower bound
on the margin difference:
\begin{small}
\begin{align}
&\gamma_{\text{v}}(\mathbf x)-\gamma^*(\mathbf x) =\langle \mathbf u,\mathbf c_u^+\rangle
-\langle \mathbf u,\mathbf c_u^-\rangle - 
\langle \mathbf u,\mathbf c_u^{*+}\rangle
+\langle \mathbf u,\mathbf c_u^{*-}\rangle\nonumber\\
\le& 
-\,c_{\text{pos}}^*\Big(
\mathrm{KL}(\boldsymbol\alpha^*\Vert\boldsymbol\pi^{*+})
-
\mathrm{KL}(\boldsymbol\alpha\Vert\boldsymbol\pi^{*+})
\Big)\nonumber\\
&+
c_{\text{neg}}^*\Big(
\mathrm{KL}(\boldsymbol\beta\Vert\boldsymbol\pi^{*-})
-
\mathrm{KL}(\boldsymbol\beta^*\Vert\boldsymbol\pi^{*-})
\Big).
\end{align}  
\end{small}

\par Since the oracle hyperparameters are fixed and satisfy $c_{\mathrm{pos}}^*>0$, $c_{\mathrm{neg}}^*>0$, and $\mathrm{KL}(\cdot\Vert\cdot)\ge 0$, dropping the non-positive terms yields
\begin{align}\label{eq:margin-gap}
    \big(\gamma_{\text{v}}(\mathbf x)-\gamma^*(\mathbf x)\big)_+\leq c_{\text{pos}}^*\mathrm{KL}(\boldsymbol\alpha\Vert\boldsymbol\pi^{*+}) + c_{\text{neg}}^*\mathrm{KL}(\boldsymbol\beta\Vert\boldsymbol\pi^{*-}),
\end{align} 
where $(x)_+=\max\{x,0\}$ denotes the positive part. Recall that $\ell(\gamma)=-\log\sigma(\gamma)$ is monotonically decreasing and $1$-Lipschitz in $\gamma$.
Thus,
\begin{equation}
\ell(\gamma^*(\mathbf x))-\ell(\gamma_{\text{v}}(\mathbf x))
\;\le\;
\big(\gamma_{\text{v}}(\mathbf x)-\gamma^*(\mathbf x)\big)_+.
\label{eq:lipschitz-single-side}
\end{equation}
Taking expectation over $\mathbf x$ yields
\begin{equation}
\mathcal L^*-\mathcal L_{\text{v}}
\;\le\;
\mathbb E_\mathbf x\big[
  \big(\gamma_{\text{v}}(\mathbf x)-\gamma^*(\mathbf x)\big)_+
\big].
\label{eq:index-loss-gap}
\end{equation}
Substituting Eq.~\eqref{eq:margin-gap} into Eq.~\eqref{eq:index-loss-gap} yields 
\begin{equation}
\mathcal L^*-\mathcal L_{\text{v}}
\;\le\;
c_{\text{pos}}^*\mathbb E_\mathbf x\mathrm{KL}(\boldsymbol\alpha\Vert\boldsymbol\pi^{*+}) + c_{\text{neg}}^*\mathbb E_\mathbf x\mathrm{KL}(\boldsymbol\beta\Vert\boldsymbol\pi^{*-}).
\label{eq:index-bound-main}
\end{equation}

\textbf{(ii) Generalization error.} We next employ the Rademacher complexity to derive the high-probability bound for the generalization error.
Recall that $0\le g_\theta(\cdot)\le B_\ell$. Define
\begin{align}
\Phi(\mathcal S)&=\sup_{\theta \in  \Theta}\big(\mathcal L_{\mathrm{v}}(\theta)
-\widehat{\mathcal L}_{\mathrm{v}}(\theta)
\big)\nonumber \\
&=
\sup_{\theta \in \Theta}\big(\mathbb E_\mathbf x[g_\theta(\mathbf x)]-
\frac1n\sum_{t=1}^n g_\theta(\mathbf x_t)\big).
\end{align}
Let $\mathcal S = \{\mathbf x_1,\dots,\mathbf x_t,\dots,\mathbf x_n\}$ and $\tilde {\mathcal S} = \{\mathbf x_1,\dots,\tilde{\mathbf x}_t,\dots,\mathbf x_n\}$ differ only at index $t$. Noting that $\mathcal L_\text{v}(\theta)$ does not depend on $\mathcal{S}$, and using 
$
\left| \sup_{\theta} a_{\theta} - \sup_{\theta} b_{\theta} \right| \le \sup_{\theta} \left| a_{\theta} - b_{\theta} \right|
$, we have
\begin{align}
|\Phi(\mathcal S) - \Phi(\tilde {\mathcal S})|&\le\sup_{\theta \in  \Theta}\big|\widehat{\mathcal L}_{\mathrm{v}}(\theta;\mathcal S)-\widehat{\mathcal L}_{\mathrm{v}}(\theta;\tilde {\mathcal S})\big|\nonumber \\
&= \sup_{\theta \in  \Theta}\big|  \frac1n\big(g_\theta(\mathbf x_t) - g_\theta(\tilde{\mathbf x}_t)\big) \big|\nonumber \\
&\leq B_\ell/n.  
\end{align}
Thus $\Phi(\mathcal S)$ satisfies a bounded differences condition with constants
$c_t = B_\ell/n$ for $t=1,\dots,n$. By McDiarmid's inequality in the one-sided form
\[
\mathbb P\big(\Phi(\mathcal S) - \mathbb E[\Phi(\mathcal S)] \ge \varepsilon\big)
\;\le\;
\exp\Big(
  -\frac{2\varepsilon^2}{\sum_{t=1}^n c_t^2}
\Big),
\]
and using $\sum_{t=1}^n c_t^2 = n\cdot (B_\ell/n)^2 = B_\ell^2/n$, we obtain
\[
\mathbb P\big(\Phi(\mathcal S) - \mathbb E[\Phi(\mathcal S)] \ge \varepsilon\big)
\;\le\;
\exp\Big(
  -\frac{2n\varepsilon^2}{B_\ell^2}
\Big).
\]
Setting the right-hand side equal to $\delta$ and solving for $\varepsilon$ yields $\varepsilon=B_\ell \sqrt{\frac{\ln(1/\delta)}{2n}}$. Therefore, with probability at least $1-\delta$,
\begin{equation}
\Phi(\mathcal S)\;\le\;
\mathbb E[\Phi(\mathcal S)]
+
B_\ell \sqrt{\frac{\ln(1/\delta)}{2n}}.\nonumber
\end{equation}

We next introduce an independent "ghost sample" $\mathcal{S}' = \{\mathbf{x}'_1,\dots,\mathbf{x}'_n\}$ with $\mathbf{x}'_t\sim\mathcal{P}$ i.i.d., along with i.i.d.\ Rademacher variables $\sigma_1,\dots,\sigma_n$. We then derive a bound on $\mathbb{E}[\Phi(\mathcal{S})]$ in terms of Rademacher complexity using the standard symmetrization procedure.
\begin{small}
\begin{align*}
\mathbb E_S[\Phi(\mathcal S)]
&=\mathbb E_\mathcal S \sup_{\theta \in  \Theta }\Big(\mathbb E_{\mathbf x}[g_\theta(\mathbf x)]
-
\frac1n\sum_{t=1}^n g_\theta(\mathbf x_t)\Big)\\
&=\mathbb E_\mathcal S
\sup_{\theta \in  \Theta}
\Big(\mathbb E_{\mathcal S'}\frac1n\sum_{t=1}^n g_\theta(\mathbf x'_t)-\frac1n\sum_{t=1}^n g_\theta(\mathbf x_t)\Big) \\
&\le
\mathbb E_{\mathcal S,\mathcal S'}
\sup_{\theta \in  \Theta}
\Big(\frac1n\sum_{t=1}^n g_\theta(\mathbf x'_t)-\frac1n\sum_{t=1}^n g_\theta(\mathbf x_t)\Big)\\
&=\mathbb E_{\mathcal S,\mathcal S',\boldsymbol\sigma}\sup_{\theta \in  \Theta}\frac1n\sum_{t=1}^n \sigma_t \big(g_\theta(\mathbf x'_t)-g_\theta(\mathbf x_t)\big)\\
&\le
\mathbb E_{\mathcal S',\boldsymbol\sigma}\sup_{\theta \in  \Theta}\frac1n\sum_{t=1}^n \sigma_t g_\theta(\mathbf x'_t)
+\mathbb E_{\mathcal S,\boldsymbol\sigma}\sup_{\theta \in \Theta}\frac1n\sum_{t=1}^n \sigma_t g_\theta(\mathbf x_t)\\
&=2\mathfrak R_n(\mathcal G).
\end{align*}
\end{small}
So, with probability at least $1-\delta$,
\begin{equation}\label{eq:Phi-high-prob}
\mathcal L_{\mathrm{v}}(\theta)
-\widehat{\mathcal L}_{\mathrm{v}}(\theta)\leq\Phi(\mathcal S)\le2\mathfrak R_n(\mathcal G)+B_\ell \sqrt{\frac{\ln(1/\delta)}{2n}}.
\end{equation}
Since the indexing error bound in Eq.~\eqref{eq:index-bound-main} holds deterministically for any $\mathcal S$, intersecting it with the $1-\delta$ high-probability event in Eq.~\eqref{eq:Phi-high-prob} yields the claimed bound with probability at least $1-\delta$.

\end{proof}

\textbf{Generalization insight}. Our bound decomposes the gap to the clean-center risk of VarBPR into two parts:
(i) an indexing/structural error term that captures the structural bias induced by the variational inference, and
(ii) a statistical generalization term governed by the bag-level Rademacher complexity of the function class $\mathcal G$ together with a $\tilde{\mathcal O}(1/\sqrt n)$ concentration term.
Thus, with sufficiently many samples and an approximately optimized VarBPR objective, the clean-center risk approaches the best achievable level within the hypothesis class up to a residual offset governed solely by the posterior--oracle-prior KL complexity term.

We next examine the KL divergence term between the posterior and the oracle prior. Focusing on the positive side (with the negative side term $\mathrm{KL}\left(\boldsymbol{\beta} \Vert \boldsymbol{\pi}^{*-}\right)$ following analogously), it admits the decomposition
\begin{small}
\begin{equation}
\mathrm{KL}(\boldsymbol\alpha\Vert\boldsymbol\pi^{*+})
=
\underbrace{\mathrm{KL}(\boldsymbol\alpha\Vert\boldsymbol\pi^+)}_{\text{variational misspecification}}
+
\underbrace{H(\boldsymbol\alpha,\boldsymbol\pi^{*+}) - H(\boldsymbol\alpha,\boldsymbol\pi^+)}_{\text{prior misspecification}}.
\end{equation}
\end{small}

This decomposition makes the KL complexity interpretable as two distinct failure modes.
The first term is a \textit{compliance gap}: it quantifies how much the learned posterior still ``overrides'' the prescribed exposure policy (e.g., long-tail oriented). Increasing the regularization strengths $c_{\mathrm{pos}},c_{\mathrm{neg}}$ (and the bag sizes $M,N$) progressively suppresses this gap by pulling $\boldsymbol\alpha$ toward $\boldsymbol\pi^+$, i.e., shrinking the effective degrees of freedom of variational posterior. The second term, $H(\boldsymbol\alpha,\boldsymbol\pi^{*+})-H(\boldsymbol\alpha,\boldsymbol\pi^+)=\mathbb E_{\boldsymbol\alpha}\!\big[\log \tfrac{\pi^+}{\pi^{*}}\big]$, is a \textit{policy-quality gap}: it measures the cost of enforcing an imperfect prior on the mass locations where $\boldsymbol\alpha$ concentrates.
In the high-compliance regime where $\boldsymbol\alpha\simeq\boldsymbol\pi^+$, it is well approximated by $\mathrm{KL}(\boldsymbol\pi^+\Vert\boldsymbol\pi^{*})$,
reflecting an irreducible ``\textbf{opportunity cost}'' incurred by using a hand-specified (potentially imperfect) $\boldsymbol\pi^+$ instead of the oracle $\boldsymbol\pi^{*+}$.
Consequently, overly strong regularization no longer learns from data, but instead locks in the residual determined by prior quality. Thus, the regularization strength induces a trade-off between compromising toward the prior (e.g., for fairness/debiasing exposure control) and the opportunity cost due to prior mismatch. Excessively strong regularization (implementing stronger exposure control) may reduce the first term, but it also makes the irreducible opportunity cost stemming from prior bias the dominant source of error.

Practically, this yields a clear \textbf{tuning principle}: when the designed prior $\boldsymbol\pi$ is trusted (e.g., a reliable long-tail or fairness policy), one can safely strengthen regularization so that the remaining indexing error is dominated by a small opportunity-cost floor; when $\boldsymbol\pi$ is uncertain, moderate regularization is preferable to avoid locking in a potentially large ``opportunity cost'' incurred by prior mismatch .

\subsection{Counteracting Popularity Bias}
In the variational inference step of VarBPR, the core process involves inferring the true intent signal—represented by the variational posterior \((\boldsymbol{\alpha}, \boldsymbol{\beta})\)—from a bag of noisy samples, guided by a regularized preference alignment principle that incorporates both debiasing and denoising. This inference mechanism introduces two directly controllable factors that enable the precise regulation of long-tail item exposure: the prior distribution shape and the regularization strength.

(i) \textit{Prior distribution}: By defining exposure profiles \(\boldsymbol{\pi}^+\) and suppression profiles \(\boldsymbol{\pi}^-\), the prior directly steers the inferred variational posteriors \((\boldsymbol{\alpha}, \boldsymbol{\beta})\) towards the desired exposure and suppression patterns, thereby shaping both the true positive and negative interest distributions. Specifically, in positive interest modeling, \(\boldsymbol{\pi}^+\) prevents \(\boldsymbol{\alpha}\) from concentrating too heavily on popular positives, thereby increasing the representation of niche, long-tail items. In negative interest modeling, \(\boldsymbol{\pi}^-\) allocates more weight to popular negative items, enhancing their suppression and reducing their interference with the true preference structure.

(ii) \textit{Regularization strength}: The temperature parameters \(c_{\mathrm{pos}}\) and \(c_{\mathrm{neg}}\) control the extent to which \((\boldsymbol{\alpha}, \boldsymbol{\beta})\) adhere to these prior distributions. This mechanism provides a smooth trade-off between preference-driven inference and policy-driven inference, ultimately enabling controlled long-tail exposure.

\subsection{Noise Mitigation}
A straightforward algebraic manipulation of the VarBPR objective then yields:
\[
\mathcal{L} = \frac{1}{|\mathcal{S}|} \sum \ln \frac{\exp\langle\mathbf{u}, \mathbf{c}_u^+\rangle}{\exp\langle\mathbf{u}, \mathbf{c}_u^+\rangle + \exp\langle\mathbf{u}, \mathbf{c}_u^-\rangle}.
\]
Mathematically, this formulation is equivalent to a contrastive learning objective with the user embedding \(\mathbf{u}\) as the anchor, pulling the positive interest center \(\mathbf{c}_u^+\) closer and pushing the negative center \(\mathbf{c}_u^-\) away. In contrast to instance-level contrastive learning that treats all samples equally, Variational BPR employs entropy-regularized variational objectives in Eq.~\eqref{eq:var-pos} and Eq.~\eqref{eq:var-neg} to regulate sample influence. Thus, as illustrated in Fig.~\ref{fig:deno}, each sample is assigned an  weight smaller than 1,  thereby reducing its potential disruptive effect on optimization. 
Viewed through the lens of self‑supervised learning, VarBPR evolves the learning signal from a rigid binary rule (“any interacted item is preferred”) to a smooth hypothesis: “the inferred positive center is preferred over a inferred negative center.” Thus, the model learns to differentiate genuine preference signals from noise, resulting in a more refined and robust characterization of user preferences.
\begin{figure}[h!]
	\vspace{-5pt}
	\centering
	\includegraphics[width=0.49\textwidth]{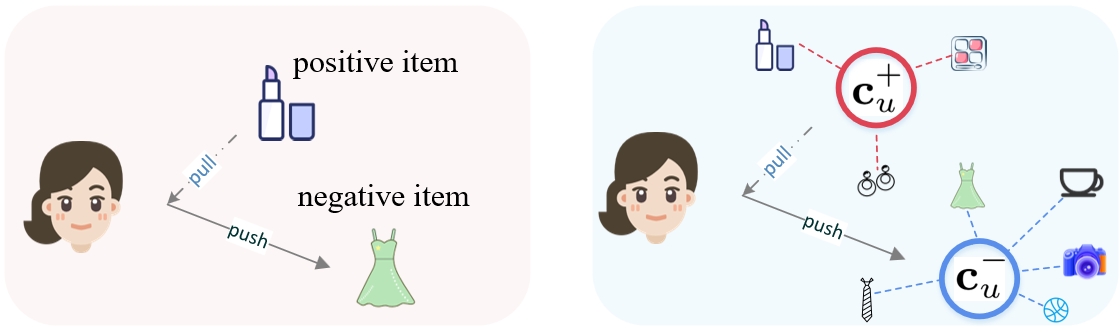}
	\caption{Noise mitigation effect. Left: Instance-level contrast.
Right: Variational attention-based prototype contrast, entropy regularization constrains the impact of noisy samples by limiting their weights smaller than 1.}
	\label{fig:deno}
\end{figure}

\subsection{Mining Hard Samples} 
By sorting candidate items by their similarity to the user embedding, we expose the ambiguous region near the decision boundary (white area in Fig.~\ref{fig:hard}(a)), where informative hard samples reside. Concretely, for a set of $M$ sampled positive candidates $\{i_m\}_{m=1}^M$ ordered by $\langle \mathbf u,\mathbf i_m\rangle$ in ascending order, $i_1$ corresponds to the hardest positive and $i_M$ to the easiest one. Recall that VarBPR infers posterior intent signals from a bag of samples. In its variational inference procedure, this naturally incorporates two complementary mechanisms for mining hard samples---on both the positive and negative sides:
(i) The scaling factors \(c_{\mathrm{pos}}\) and \(c_{\mathrm{neg}}\) explicitly control the posterior mass assigned to the informative samples (Fig.~\ref{fig:hard}(b)); and
(ii) Sample hardness can be directly encoded in the priors \(\boldsymbol{\pi}^+\) and \(\boldsymbol{\pi}^-\) (for a concrete instantiation, see Eqs.~\eqref{eq:prior1}--\eqref{eq:prior2}), thereby guiding the variational posteriors toward hard samples in a principled way.

\begin{figure}[h!]
	\vspace{-5pt}
	\centering
	\includegraphics[width=0.49\textwidth]{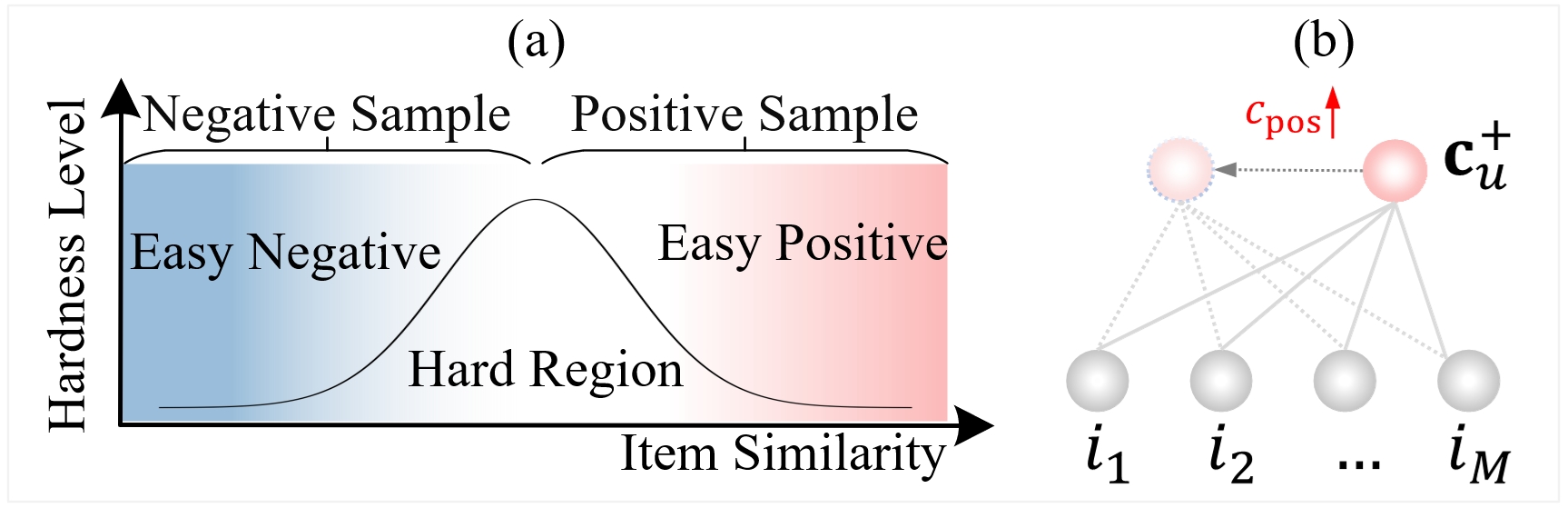}
	\caption{Variational inference enables controllable hard sample mining for both positive and negative sample.}
	\label{fig:hard}
\end{figure}

\subsection{Relationship to Previous Work}
VarBPR builds upon yet meaningfully departs from several related lines of work. Unlike prototype-based contrastive methods (e.g., NCL~\cite{lin2022improving}) that suppress noise through clustering, VarBPR leverages the entropy regularization inherent in variational inference for noise reduction. Compared to virtual sample synthesis method like MixGCF~\cite{10.1145/3447548.3467408}, the interest centers in VarBPR result from a plug-in approximation, which serves as a practical trade-off between performance and computational efficiency, with the ultimate aim of approximately optimizing the evidence lower bound (ELBO). Finally, VarBPR shares similarities with the Expectation-Maximization (EM) algorithm~\cite{Dempster:1977:RSS}, where the variational inference step for $\boldsymbol{\alpha}$ and $\boldsymbol{\beta}$ corresponds to the E-step, and the gradient update in VarBPR corresponds to the M-step.

\section{Experiments}
In the experimental section, we address the following research questions.
\textbf{RQ1 (Effectiveness):} Does VarBPR outperform strong baselines in top-$K$ ranking accuracy?
\textbf{RQ2 (Exposure controllability):} Can VarBPR provide controllable long-tail exposure via its analytical variational inference mechanism?
\textbf{RQ3 (Scalability):} How do the training costs of VarBPR compare with standard BPR?
\textbf{RQ4 (Ablation):} What is the contribution of each component---priors, variational posteriors $(\boldsymbol{\alpha},\boldsymbol{\beta})$, and plug-in compression?
\textbf{RQ5 (Compression validity):} Is the plug-in approximation error (Jensen gap) consistent with the theoretical bound?
\textbf{RQ6 (Robustness):} Is VarBPR more robust to noisy interactions?

\subsection{Experimental Setup}

\subsubsection{Datasets} To ensure a comprehensive evaluation, we conduct experiments on four widely-used public benchmarks from the recommendation literature, covering both explicit ratings and implicit feedback scenarios.
Specifically, we include two rating datasets: MovieLens-100K and MovieLens-1M~\cite{harper2015movielens}.
In addition, we adopt two implicit feedback datasets: Yelp 2018~\cite{Xiangnan:2020:SIGIR}, and Gowalla~\cite{Wang:2019:SIGIR}.
These datasets vary in domain, scale, and interaction sparsity, allowing us to assess the robustness and generalizability of the proposed approach across different recommendation settings.

\subsubsection{Evaluation protocols} For the MovieLens100k and MovieLens1M datasets, we randomly selected 50\% of the items rated 4 or higher by each user as a “clean" test set~\cite{wang2021denoising}. The remaining data was converted into implicit feedback for model training. The training set included false-positive examples (low-rated items) and false-negative examples (items the user might like but has not seen). For the Yelp2018 and Gowalla datasets, we randomly selected 20\% of interactions for testing, with the remaining 80\% used for training. Similar to the rating datasets, the training set included false-positive and false-negative examples. Although we couldn't create a perfectly clean test set from implicit feedback, the noise level is low enough to reasonably assess personalized recommendation accuracy, which is a common practice~\cite{10.1145/3289600.3291027,Xiangnan:2020:SIGIR,wan2022cross,10158930}. In line with existing studies~\cite{Wang:2019:SIGIR, Xiangnan:2020:SIGIR}, we evaluated ranking accuracy using Recall@20 and NDCG@20. Additionally, we assessed the \textbf{A}verage \textbf{P}roportional \textbf{L}ong-\textbf{t}ail items in top-k recommendations using  APLT@20, where long-tail items are defined as the bottom 85\% of items in the item popularity distribution. 

\subsubsection{Baseline methods}
We systematically evaluate Variational BPR (VarBPR) through a carefully designed set of comparative experiments. The selected baselines cover major technical lines in ranking optimization, which we organize into six categories to ensure both breadth and depth in the comparison:
(i) Foundational Ranking Losses: including the classic Bayesian Personalized Ranking (BPR)~\cite{Steffen:2009:UAI} and the contrastive-learning-based InfoNCE (also called Sampled Softmax)~\cite{Oord:2018:arxiv,Jiancan:2022:arxiv}.
(ii) Improved Pairwise Losses: covering group-behavior-aware GBPR~\cite{Weike:2013:IJCAI} and cross-domain contrastive CPR~\cite{wan2022cross}.
(iii) Noise-Reduction Methods: including denoising-oriented ADT~\cite{wang2021denoising} and self-guiding SGDL~\cite{gao2022self}.
(iv) Statistical Debiasing Techniques: involving causality-disentangling DCL~\cite{Chuang:2020:NIPS}, unbiased UBPR~\cite{saito2020unbiased}, and practical-debiasing PUPL~\cite{cao2024practically}.
(v) Popularity Debiasing Approaches: including popularity-disentangled PopDCL~\cite{Liupopdcl} and bias-combining BC Loss~\cite{zhang2022incorporating}.
(vi) Hard Negative Mining Strategies: represented by hard contrastive loss HCL~\cite{Robinson:2021:ICLR} and adversarially-enhanced AdvInfoNCE~\cite{zhang2024empowering}.
This multi-faceted comparison framework is designed to clearly delineate the advantages and limitations of VarBPR relative to existing paradigms.

\begin{table*}[!ht]
	\caption{The performance comparison is conducted using the Matrix Factorization (MF) backbone. The best results are highlighted in bold, while the second-best results are underlined. The improvement of VarBPR is calculated relative to the second-best results. The improvement achieved by VarBPR is statistically significant ($p$-value $<$ 0.01).}\label{tab:exp1}
	\centering
	\resizebox{0.99\textwidth}{!}{
		\begin{tabular}{llc>{\columncolor{gray!15}}c>{\columncolor{gray!15}}ccc>{\columncolor{gray!15}}c>{\columncolor{gray!15}}c}
			\toprule
			\multirow{2}{*}{Method} &   \multicolumn{2}{c}{MovieLens100K} &\multicolumn{2}{c}{\cellcolor{gray!15}MovieLens1M} &\multicolumn{2}{c}{Gowalla}&\multicolumn{2}{c}{\cellcolor{gray!15}Yelp2018} \\\cmidrule(lr){2-9}
			& Recall@20 & NDCG@20 & Recall@20 & NDCG@20 & Recall@20 & NDCG@20 & Recall@20 & NDCG@20 \\\midrule
			BPR~\cite{Steffen:2009:UAI} & 0.3226 & 0.4374 & 0.2175 & 0.4153 &  0.1408 & 0.1063 & 0.0704 & 0.0584  \\
			GBPR~\cite{Weike:2013:IJCAI} & 0.3049 & 0.4046 & 0.2195 & 0.4157 & 0.1543 & 0.1185 & 0.0723 & 0.0593 \\
			InfoNCE~\cite{Oord:2018:arxiv} & 0.3265 & 0.4436 & 0.2210 & 0.4214 & 0.1691 & 0.1273 & 0.0876 &0.0723 \\
			UBPR~\cite{saito2020unbiased} & 0.3253 & 0.4392 & 0.2168 & 0.4172 &  0.1411 & 0.1082 & 0.0719 & 0.0592  \\
			DCL~\cite{Chuang:2020:NIPS} & 0.3265 & 0.4446 & 0.2216 & 0.4228 & 0.1712 & 0.1293 & 0.0881  &0.0725  \\
			HCL~\cite{Robinson:2021:ICLR}        & \underline{0.3270} &\underline{0.4482} & 0.2225 & 0.4239 & \underline{0.1754} & 0.1318 & 0.0885 & 0.0728 \\
            ADT~\cite{wang2021denoising} & 0.3260 & 0.4432 & 0.2209 & 0.4209 & 0.1688 & 0.1270 & 0.0871 &0.0726 \\
            SGLD~\cite{gao2022self} & 0.3264 & 0.4442 & 0.2230 & 0.4212 & 0.1721 & 0.1302 & 0.0880 & 0.0721 \\
			BC Loss~\cite{zhang2022incorporating}& 0.3268 & 0.4452 & 0.2120 & 0.4239 & 0.1724 & 0.1290 & 0.0882 & 0.0726 \\
			CPR~\cite{wan2022cross}  & 0.3263 & 0.4408 & 0.2190 & 0.4176 & 0.1421 & 0.1075 & 0.0717 & 0.0596 \\
			PopDCL~\cite{Liupopdcl} & 0.3264 & 0.4458 & 0.2235 & 0.4223 & 0.1726 & 0.1308 & 0.0883 & 0.0724 \\
			AdvInfoNCE~\cite{zhang2024empowering} & 0.3260 & 0.4473 & 0.2232 & \underline{0.4241} & 0.1750 & \underline{0.1325} & \underline{0.0889} & \underline{0.0732} \\
            PUPL~\cite{cao2024practically} & 0.3265 & 0.4468 & \underline{0.2233} & 0.4233 & 0.1739 & 0.1315 & 0.0888 & 0.0731 \\
			VarBPR & \textbf{0.3566} & \textbf{0.4919} &  \textbf{0.2396}& \textbf{0.4586} & \textbf{0.2032} & \textbf{0.1482} & \textbf{0.0991} & \textbf{0.0822} \\
			Improv.  & \textcolor{myred}{+9.05\%} &\textcolor{myred}{+9.75\%} &\textcolor{myred}{+ 7.30\%} &\textcolor{myred}{+8.13\%} &\textcolor{myred}{+15.85\%} &\textcolor{myred}{+11.85\%} &\textcolor{myred}{+11.47\% } &\textcolor{myred}{+ 12.30 \%} \\
			\bottomrule
		\end{tabular}
	}
\end{table*}

\begin{table*}[!h]
	\caption{Performance w.r.t. different backbones. The improvement is statistically significant ($p$-value $<$ 0.01).}
	\label{tab:exp2}
	\centering
	\resizebox{0.99\textwidth}{!}{
		\begin{tabular}{ll>{\columncolor{gray!15}}c>{\columncolor{gray!15}}ccc>{\columncolor{gray!15}}c>{\columncolor{gray!15}}c}
			\toprule
			\multirow{2}{*}{Dataset} &\multirow{2}{*}{LossFunc} &   \multicolumn{2}{c}{\cellcolor{gray!15}MF} &   \multicolumn{2}{c}{LightGCN}&   \multicolumn{2}{c}{\cellcolor{gray!15}XSimGCL}\\
			& & Recall@20 & NDCG@20& Recall@20 & NDCG@20& Recall@20 & NDCG@20\\ \midrule
			\multirow{3}{*}{MovieLens100K}&BPR & 0.3226& 0.4374& 0.3217& 0.4469& 0.3290& 0.4489 \\
			& VarBPR & 0.3566 & 0.4919& 0.3649& 0.5005& 0.3641& 0.5001\\
			&Improv. &\textcolor{myred}{ +10.53\%}&\textcolor{myred}{ +12.46\%}& \textcolor{myred}{+13.43\%}& \textcolor{myred}{ +11.99\%}& \textcolor{myred}{+10.66\%}& \textcolor{myred}{+11.40\%}\\\cmidrule(lr){1-2}
			
			\multirow{3}{*}{MovieLens1M}&BPR &  0.2175& 0.4153& 0.2207& 0.4369 & 0.2232& 0.4395\\
			& VarBPR &  0.2396& 0.4586 & 0.2428& 0.4704& 0.2445& 0.4815\\
			&Improv. & \textcolor{myred}{ +10.16\%}& \textcolor{myred}{ +10.43\%}&  \textcolor{myred}{ +10.01\%}&  \textcolor{myred}{ +9.56\%}&  \textcolor{myred}{+9.54\%}&  \textcolor{myred}{+6.51\%}\\\cmidrule(lr){1-2}
			
			\multirow{3}{*}{Gowalla} & BPR &  0.1408& 0.1063& 0.1756& 0.1314& 0.1791& 0.1365\\
			& VarBPR &  0.2032 & 0.1482& 0.2123& 0.1567& 0.2152& 0.1618\\
			&Improv. &\textcolor{myred}{ +44.32\%}&\textcolor{myred}{ +39.41\%}& \textcolor{myred}{ +20.90\%}& \textcolor{myred}{ +19.25\%}& \textcolor{myred}{ +20.16\%}& \textcolor{myred}{+18.53\%}\\\cmidrule(lr){1-2}
			
			\multirow{3}{*}{Yelp2018}&BPR &  0.0704& 0.0584 & 0.0818& 0.0677& 0.0902& 0.0741\\
			& VarBPR &  0.0991& 0.0822& 0.1097& 0.0912& 0.1118& 0.0928\\
			&Improv. &\textcolor{myred}{ +40.76\%}& \textcolor{myred}{ +40.75\%}& \textcolor{myred}{ +34.11\%}& \textcolor{myred}{ +34.71\%}& \textcolor{myred}{ +23.95\%}& \textcolor{myred}{ +25.23\%}\\
			\bottomrule
		\end{tabular}
	}
\end{table*}

\subsubsection{Prior encoding} We hereby introduce a practical implementation of prior encoding based on three intuitive signals: popularity, quality, and hardness. For an enriched interaction \(\mathbf{x} = (u, i_{1:M}, j_{1:N})\), we define the exposure and suppression priors as follows:
\begin{align}
\pi^+_i &\propto \mathrm{rar}(i)^{\lambda_1^+} \cdot\mathrm{qual}(i) ^{\lambda_2^+} \cdot  \mathrm{hard}^+(i)^{\lambda_3^+},\label{eq:prior1} \\
\pi^-_j &\propto \mathrm{pop}(j)^{\lambda_1^-} \cdot\mathrm{bad\_qual}(j) ^{\lambda_2^-} \cdot  \mathrm{hard}^-(j)^{\lambda_3^-}\label{eq:prior2}.
\end{align}
Here, \(\mathrm{rar}(i) = 1 - \mathrm{pop}(i)\) measures the rarity of an item, and  \(\mathrm{pop}(i) \in [0,1]\) represents the log-normalized interaction count. In cases where ratings are available, we convert them into a bounded quality score \(\mathrm{qual}(j) \in (0,1)\) using mean-centered sigmoid normalization. Additionally, \(\mathrm{bad\_qual}(j) = 1 - \mathrm{qual}(j)\). For datasets without ratings (e.g., Gowalla and Yelp2018), this term is omitted. Hardness is evaluated bag-wise from model scores to identify samples that lie close to the decision boundary. Specifically, the hardness of a positive sample \(i\) is defined as: 
$\mathrm{hard}^+(i) = \mathrm{softmax} ((\bar{s}^+ - s_{i})/{\tau} )$,
and the hardness of a negative sample \(j\) is defined as: 
$\mathrm{hard}^-(j) = \mathrm{softmax} ( (s_{j} - \bar{s}^-)/{\tau} )$,
where \(\bar{s}^+\) and \(\bar{s}^-\) are the mean scores of the positive and negative samples in the bag, respectively.

Intuitively, \(\boldsymbol{\pi}^+\) encodes the exposure profile of rare, hard, high-quality items, while \(\boldsymbol{\pi}^-\) encodes the suppression profile of popular, hard, low-quality items. The prior exponents \(\{\lambda^\pm_{\cdot}\}\) provide an intuitive control interface: each weight  determines the strength of its corresponding signal (e.g., rarity), enabling the flexible encoding of tailored exposure or suppression profiles based on application requirements.

The popularity and quality signals are precomputed and stored in a buffer, while the hardness signal relies on similarity scores for items in the bag. These scores are not computed separately for hardness evaluation but are reused during both variational inference and loss computation. As a result, the additional computational overhead from prior encoding is minimal. Furthermore, the prior encoding does not require normalization, as the softmax in the variational posterior is invariant to constant scaling. What truly matters is the relative magnitude of these priors within bag.

\subsubsection{Hyper-parameter settings}
VarBPR was implemented in PyTorch. For fair comparison, all methods used the same backbone encoder and the same training protocol (including uniform random sampling of training pairs), and were optimized with Adam (learning rate $10^{-3}$).
All VarBPR-specific hyperparameters, including bag size $(M,N)$, regularization strength $(c_{\mathrm{pos}},c_{\mathrm{neg}})$, and prior exponents ${\lambda^\pm_{~\cdot}}$, are reported in the released code for reproducibility.

\subsection{Ranking Performance}
We first examine whether VarBPR outperforms strong baselines in top-$K$ ranking accuracy (\textbf{RQ1}).
First, using MF as the backbone, Table~\ref{tab:exp1} compares a range of training objectives. Popularity debiasing, denoising, and hard negative mining consistently improve over vanilla BPR. VarBPR achieves the best accuracy by integrating these effective mechanisms---hardness-aware learning via prior design, denoising via softened assignments, and popularity-aware debiasing via prior matching---within a unified variational objective.
Second, we replace the BPR loss with VarBPR in widely adopted backbones (MF~\cite{Koren:2009:Computer}, LightGCN~\cite{Xiangnan:2020:SIGIR}, and XSimGCL~\cite{10158930}) while keeping backbone hyperparameters unchanged. Table~\ref{tab:exp2} shows consistent gains across architectures, with relative improvements of up to 40\%.

\begin{figure}[!h]
	\centering
	\subfigure[Within-bag KL divergence.]
	{\includegraphics[width=0.492\columnwidth]{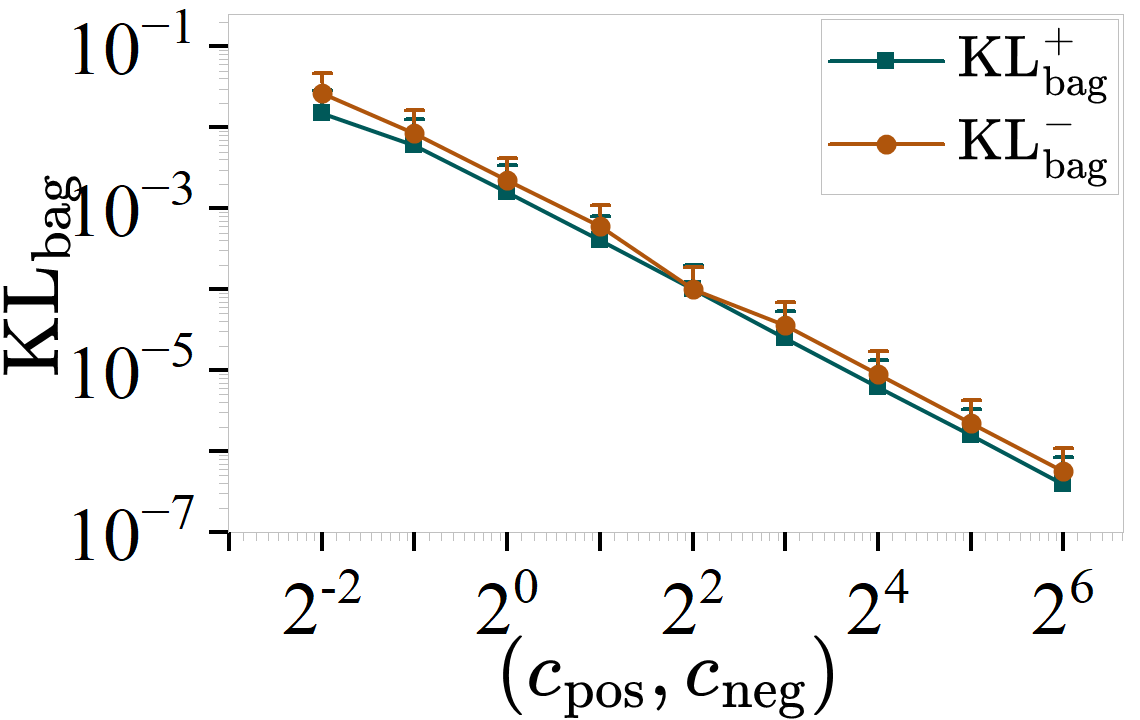}\label{fig:bagkl}}
	\subfigure[Global KL divergence.]
	{\includegraphics[width=0.492\columnwidth]{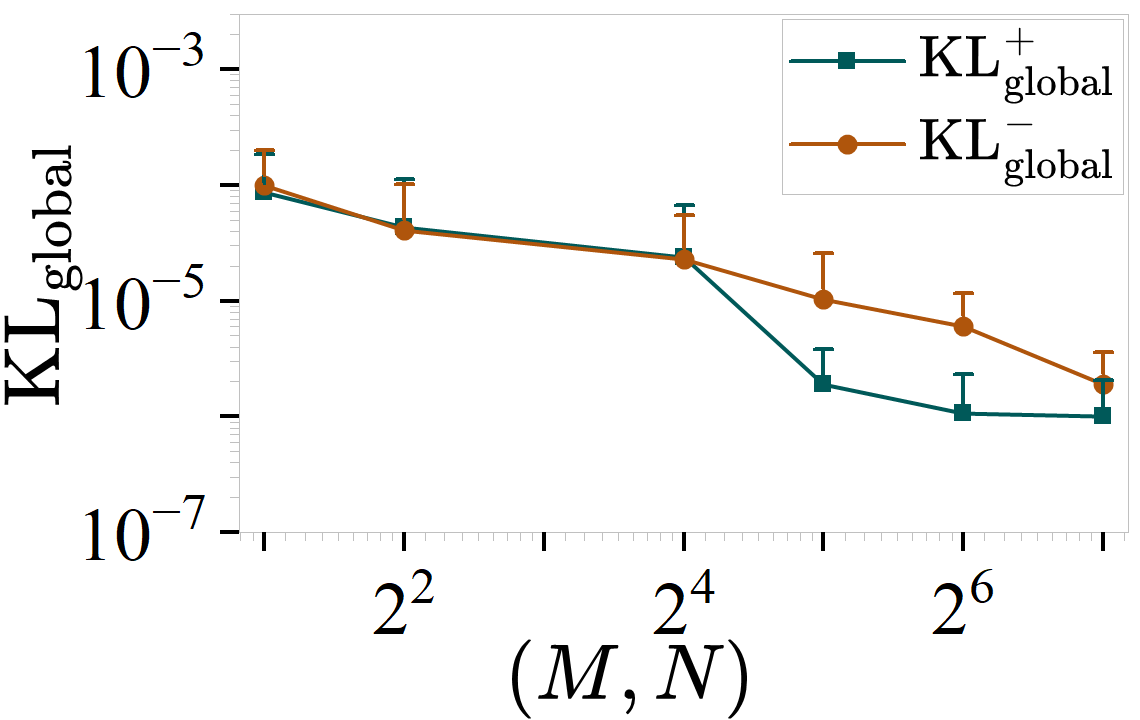}\label{fig:globalkl}}
    \caption{Variational posterior-prior alignment controlled by regulation strength $(c_{\mathrm{pos}},c_{\mathrm{neg}})$ and bag size $(M,N)$.}
\end{figure}
\begin{figure}[!h]
	\centering
	\subfigure[Long-tail oriented exposure.]
	{\includegraphics[width=0.492\columnwidth]{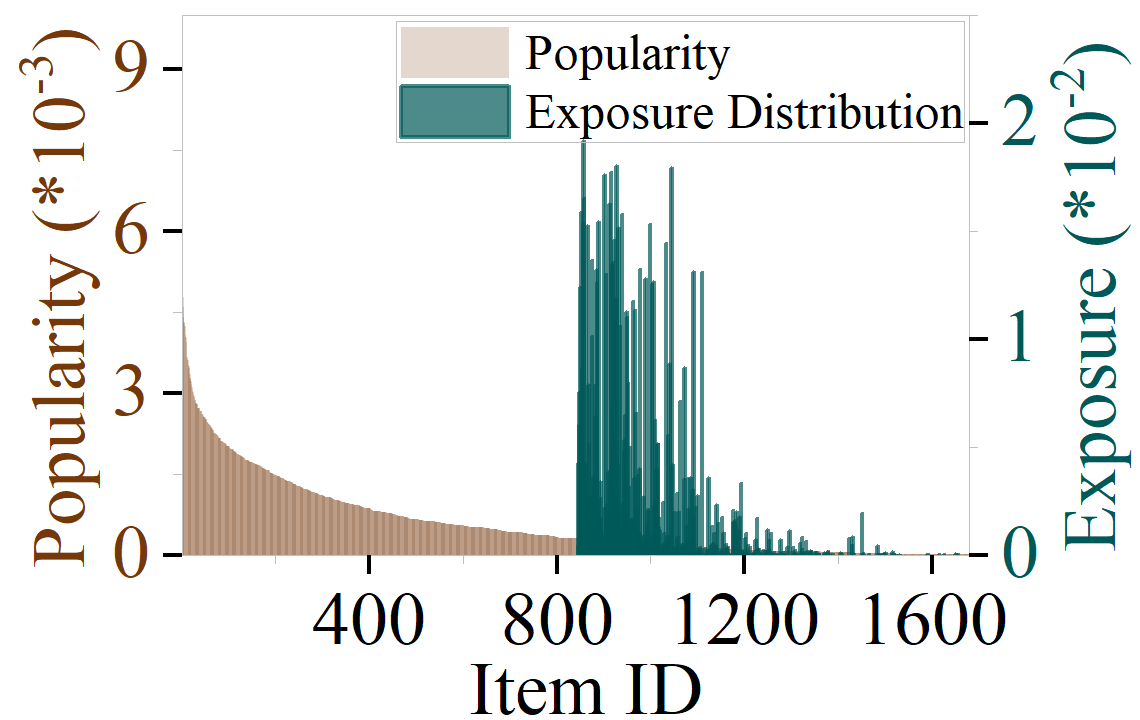}\label{fig:dirct1}}
	\subfigure[Quality oriented exposure.]
	{\includegraphics[width=0.492\columnwidth]{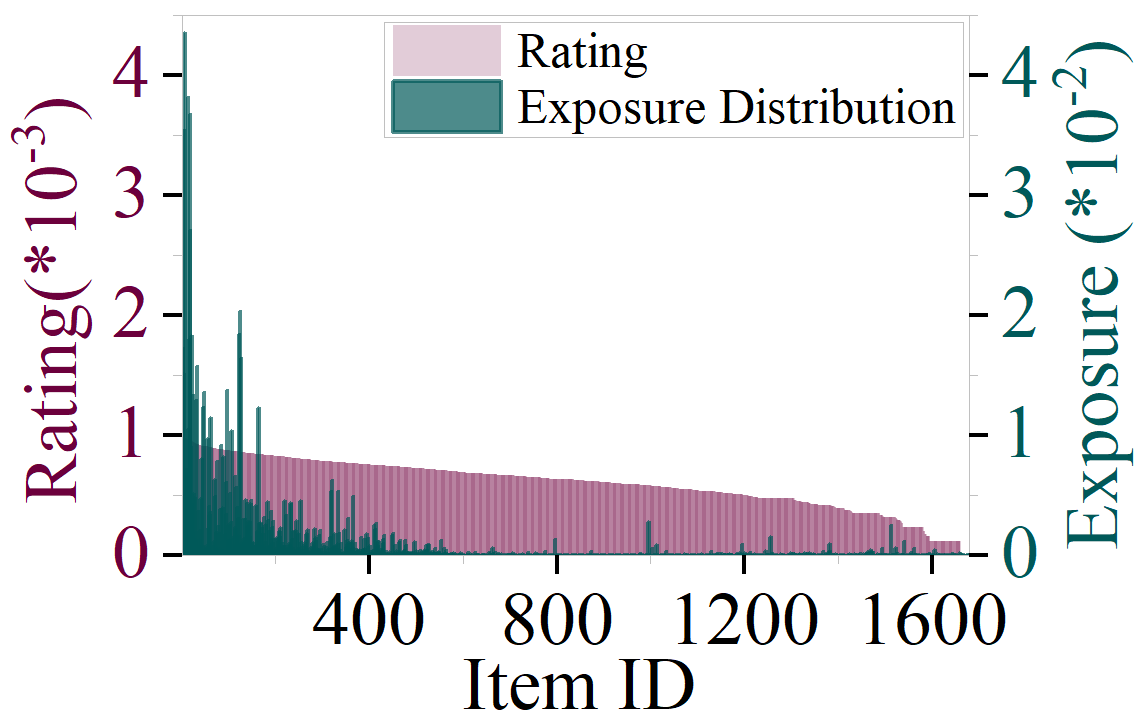}\label{fig:dirct2}}
    \caption{The prior exposure pattern determines the empirical exposure distribution under high-compliance settings, with two notable orientations: long-tail oriented (a) and high-quality oriented (b).}
\end{figure}
\begin{figure}[!h]
	\centering
	\includegraphics[width=0.49\textwidth]{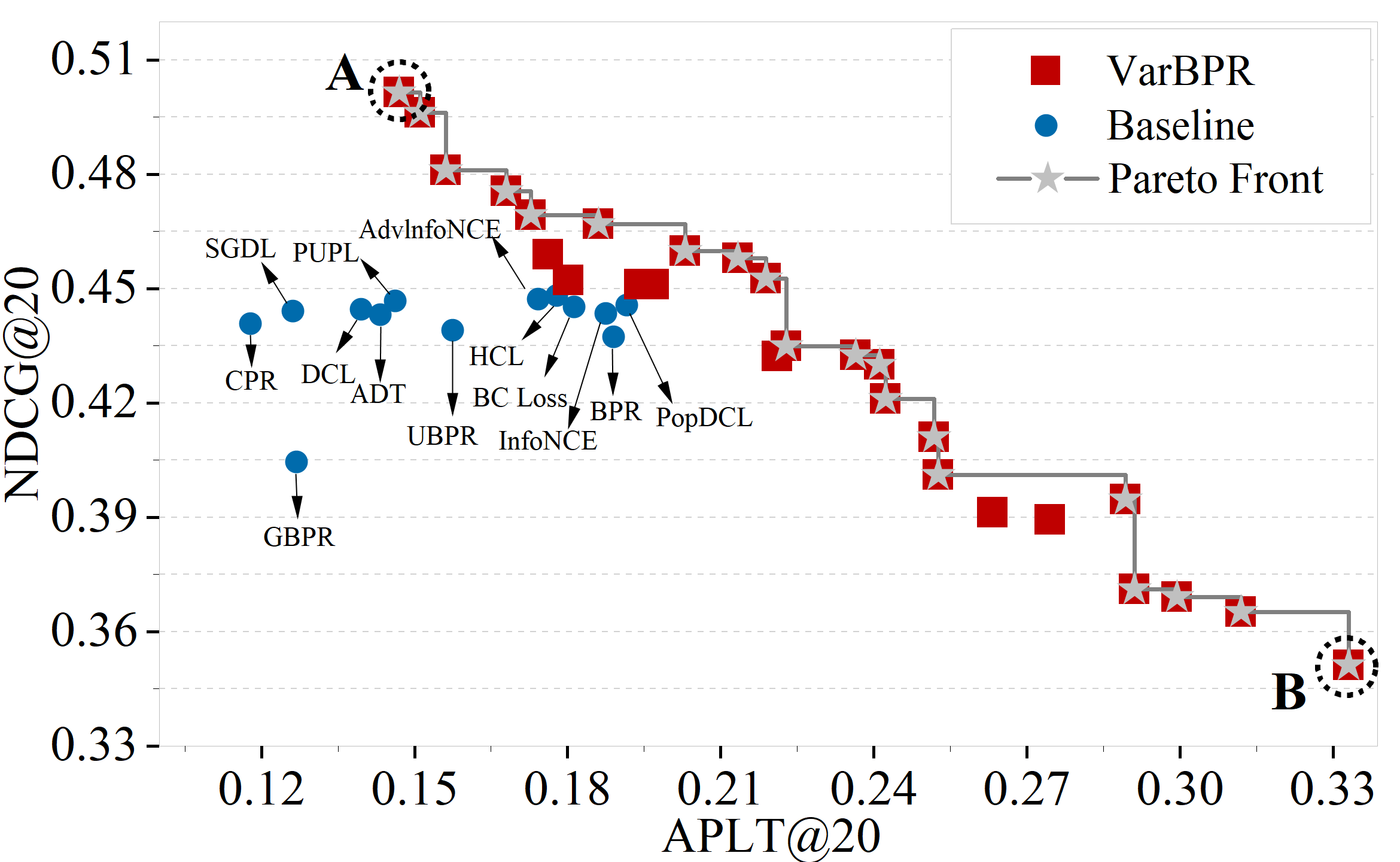}
	\caption{VarBPR enables a controllable long-tail exposure through a flexible direction-strength variational mechanism.}
	\label{fig:pareto}
\end{figure}
\begin{figure*}[!h]
	\centering
	\includegraphics[width=\textwidth]{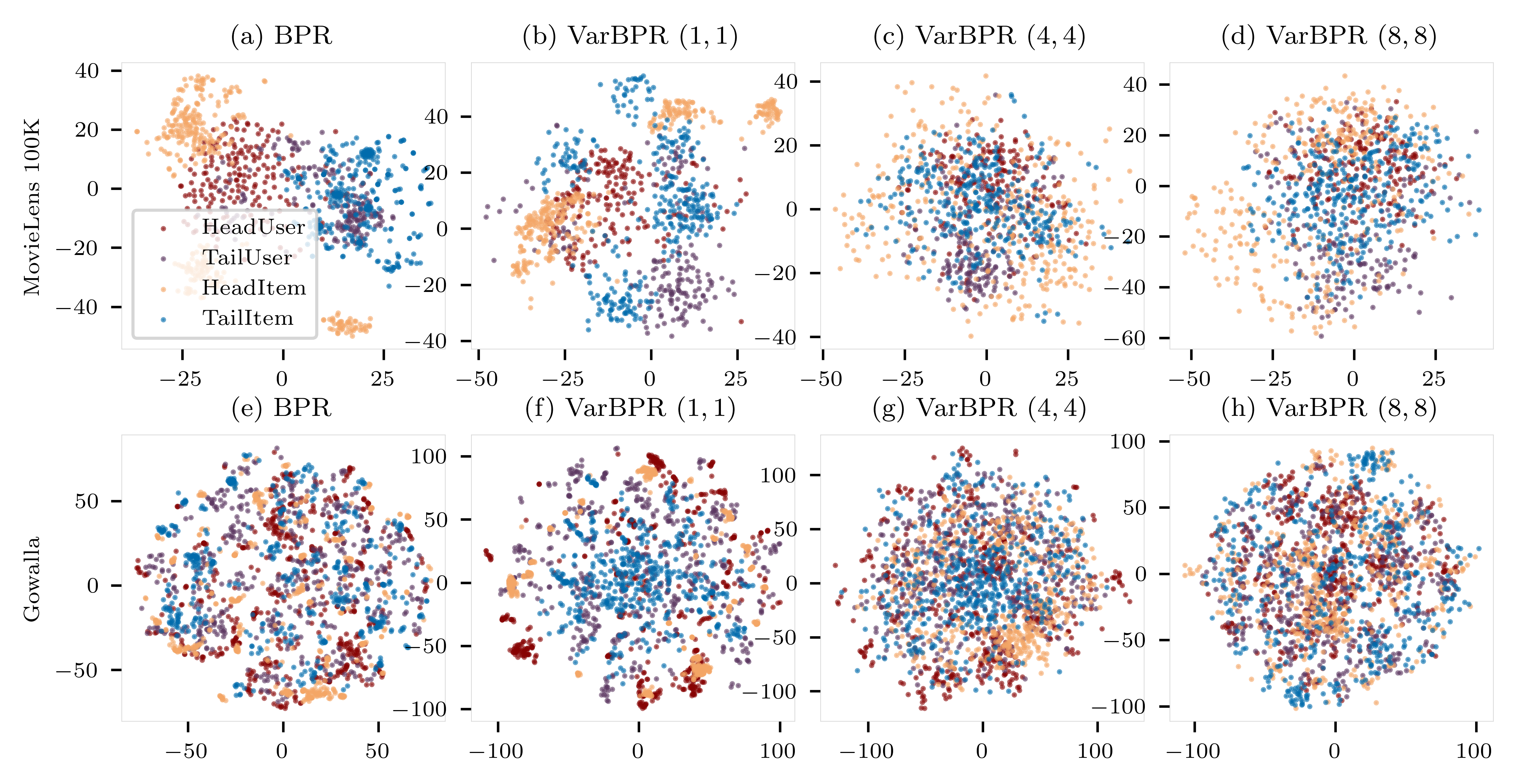}
	\caption{With an increase in the strength parameters $(c_{\mathrm{pos}}, c_{\mathrm{neg}})$ for the long-tail exposure oriented prior, the representations of head/tail items and users progressively become more uniform, accompanied by a controlled increase in long-tail exposure.}
	\label{fig:dist}
\end{figure*}
\subsection{Controllable Long-tail Exposure}
We demonstrate that VarBPR enables controllable long-tail exposure via its variational inference procedure (\textbf{RQ2}).
We decompose controllability into two orthogonal factors: (i) the target exposure profile encoded by the prior (direction), and (ii) posterior--prior compliance controlled by the inference hyperparameters $(c_{\mathrm{pos}}, c_{\mathrm{neg}}, M, N)$ (strength).

\textit{Inference hyperparameters $(c_{\mathrm{pos}}, c_{\mathrm{neg}}, M, N)$ control the strength of posterior--prior compliance.}
We quantify posterior--prior adherence at two scales.
\textbf{(i) Bag-level compliance.}
Fixing $M=N=4$, we vary $(c_{\mathrm{pos}},c_{\mathrm{neg}})$ and report
$\mathrm{KL}_{\mathrm{bag}}^{+}=\frac{1}{|\mathcal{S}|}\sum_{\mathbf{x}}\!\big[\mathrm{KL}(\boldsymbol{\alpha}(\mathbf{x})\|\boldsymbol{\pi}^{+}_{B}(\mathbf{x}))\big]$
and
$\mathrm{KL}_{\mathrm{bag}}^{-}=\frac{1}{|\mathcal{S}|}\sum_{\mathbf{x}}\!\big[\mathrm{KL}(\boldsymbol{\beta}(\mathbf{x})\|\boldsymbol{\pi}^{-}_{B}(\mathbf{x}))\big]$,
where $\boldsymbol{\pi}^{\pm}_{B}(\mathbf{x})$ is the prior restricted to the sampled items in $\mathbf{x}$ and renormalized.
Fig.~\ref{fig:bagkl} shows that $(c_{\mathrm{pos}},c_{\mathrm{neg}})$ provides a smooth handle on within-bag divergence. \textbf{(ii) Global compliance.}
Fixing $c_{\mathrm{pos}}=c_{\mathrm{neg}}=4$ and varying $(M,N)$, we report
$\mathrm{KL}_{\mathrm{global}}^{+}=\frac{1}{|\mathcal{U}|}\sum_{u}\!\big[\mathrm{KL}(\boldsymbol{\alpha}_{u}\|\boldsymbol{\pi}^{+}_{u})\big]$
and
$\mathrm{KL}_{\mathrm{global}}^{-}=\frac{1}{|\mathcal{U}|}\sum_{u}\!\big[\mathrm{KL}(\boldsymbol{\beta}_{u}\|\boldsymbol{\pi}^{-}_{u})\big]$.
Here, $\boldsymbol{\pi}^{+}_{u}$ (resp.\ $\boldsymbol{\pi}^{-}_{u}$) is the normalized prior on $\mathcal{I}^{+}_{u}$ (resp.\ $\mathcal{I}^{-}_{u}$), and
$\boldsymbol{\alpha}_{u},\boldsymbol{\beta}_{u}$ are user-level posteriors formed by pooling variational mass across bags (with zero fill for unsampled items) and renormalizing on $\mathcal{I}^{\pm}_{u}$.
Fig.~\ref{fig:globalkl} shows that bag size $(M,N)$ systematically modulates global variational posterior-prior adherence.
Overall, $(c_{\mathrm{pos}},c_{\mathrm{neg}})$ controls the \textit{strength} of adherence, while $(M,N)$ controls its \textit{coverage}, jointly shaping posterior-prior alignment.

\textit{Prior determines the exposure direction.}
Next, we verify that the prescribed priors $(\boldsymbol{\pi}^+,\boldsymbol{\pi}^-)$ determine which exposure pattern the posteriors compromise toward.
On the MovieLens-100K dataset, we adopt a high-compliance setting with $c_{\mathrm{pos}}=c_{\mathrm{neg}}=100$ and the largest feasible bag sizes $(M,N)$ (approximately covering $\mathcal{I}_u^+$ and $\mathcal{I}_u^-$). This configuration ensures that the variational inference is policy-driven, compelling the posterior to closely align with the prior over their respective supports.
Next, we consider two representative priors.
\textit{(1) Long-tail oriented exposure:} $\pi^+$ assigns unit weight to the bottom 50\% items by popularity (and zero otherwise), while $\pi^-$ assigns unit weight to the top 50\% items to further suppress head items.
\textit{(2) Quality-oriented exposure:} $\pi^+$ assigns unit weight to the top 50\% items by quality (average rating), while $\pi^-$ assigns unit weight to the bottom 50\% items to suppress low-quality items.
Fig.~\ref{fig:dirct1} compares the resulting Top-$20$ exposure profile against the popularity distribution (sorted in descending order), showing that exposure mass shifts markedly toward the bottom 50\% items.
Fig.~\ref{fig:dirct2} compares the Top-$20$ exposure profile against the quality distribution (sorted by average rating in descending order), showing a consistent shift toward the top 50\% items by quality.
Overall, the empirical exposure profile moves coherently with $\boldsymbol{\pi}^+$, with $\boldsymbol{\pi}^-$ providing an auxiliary suppressive constraint, confirming that the prior specifies the exposure direction.

\textit{Controllable long-tail exposure.}
In top-$K$ recommendation, improving ranking accuracy typically concentrates exposure on the most relevant items—often high-quality head items—leaving less exposure budget for the long tail. Consequently, there is an inherent trade-off between ranking accuracy and long-tail exposure, and a practical learning algorithm should provide an explicit control interface over long-tail exposure to meet diverse application needs.

Next, we show that VarBPR enables controllable long-tail exposure through a flexible direction-strength variational inference mechanism. \textbf{(i) Direction control}:
For exposure control, in the exposure prior (Eq.~\ref{eq:prior1}), we adjust the rarity component \(\lambda_1^+\) in steps of 0.2, increasing from 0 to 1, and the quality component \(\lambda_2^+\) in steps of 0.2, decreasing from 1 to 0. This transition shifts the prior exposure pattern from quality-driven to long-tail-driven.
For suppression control, in the suppression prior (Eq.~\ref{eq:prior2}), we adjust the popularity component \(\lambda_1^-\) in steps of 0.1, increasing from 0 to 0.5, and the low-quality component \(\lambda_2^-\) in steps of 0.1, decreasing from 0.5 to 0. This transition moves the prior suppression pattern from low-quality to popular-item suppression, with a fixed hardness factor of 0.5 to ensure stable gradients. \textbf{(ii) Strength control}:
We adjust the regularization parameters \(c_{\text{pos}}\) and \(c_{\text{neg}}\) in steps of 2, increasing from 2 to 10, progressively enhancing the posterior–prior adherence.

\begin{figure}[!t]
	\centering
	\subfigure[Training time.]
	{\includegraphics[width=0.492\columnwidth]{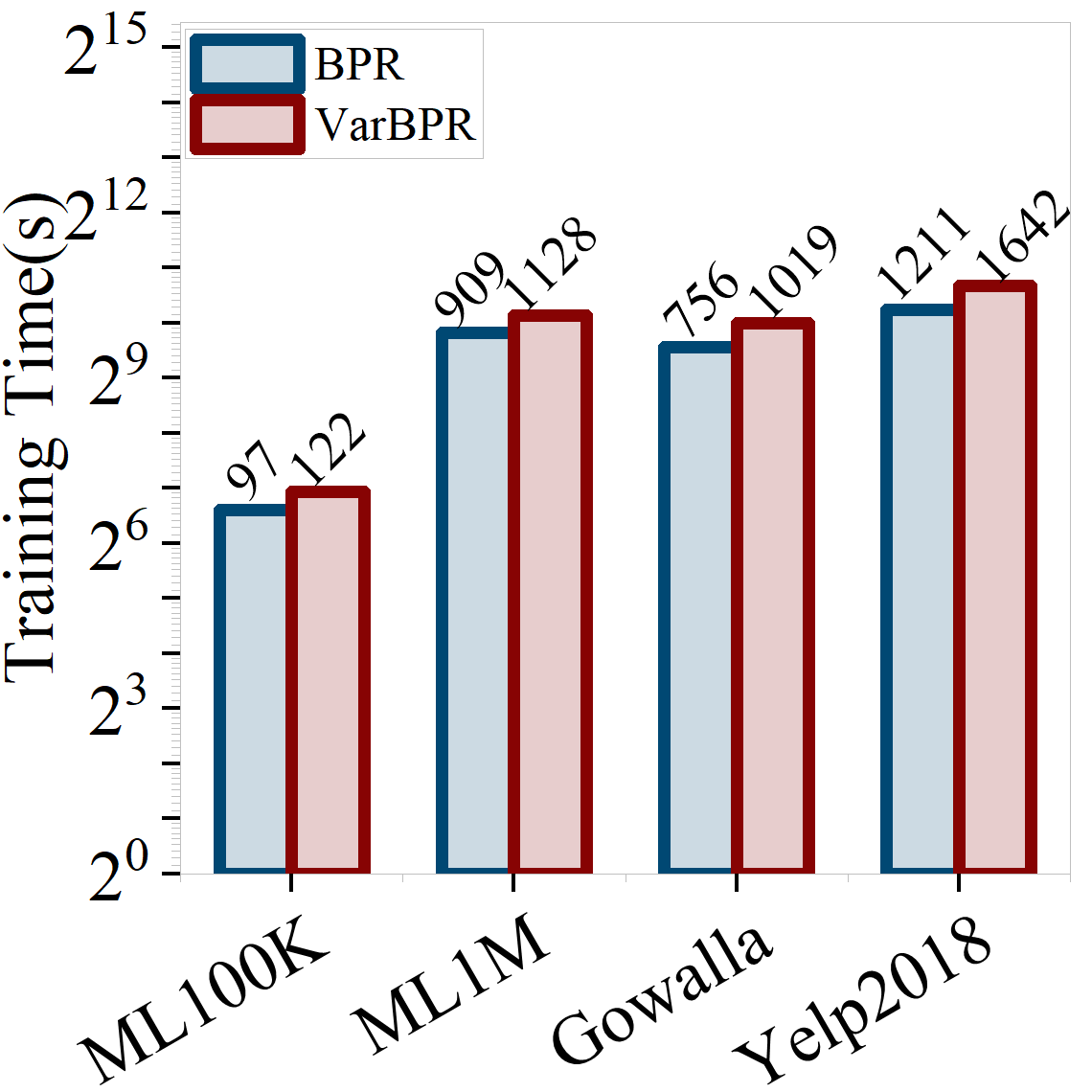}}
	\subfigure[Peak GPU memory usage.]
	{\includegraphics[width=0.492\columnwidth]{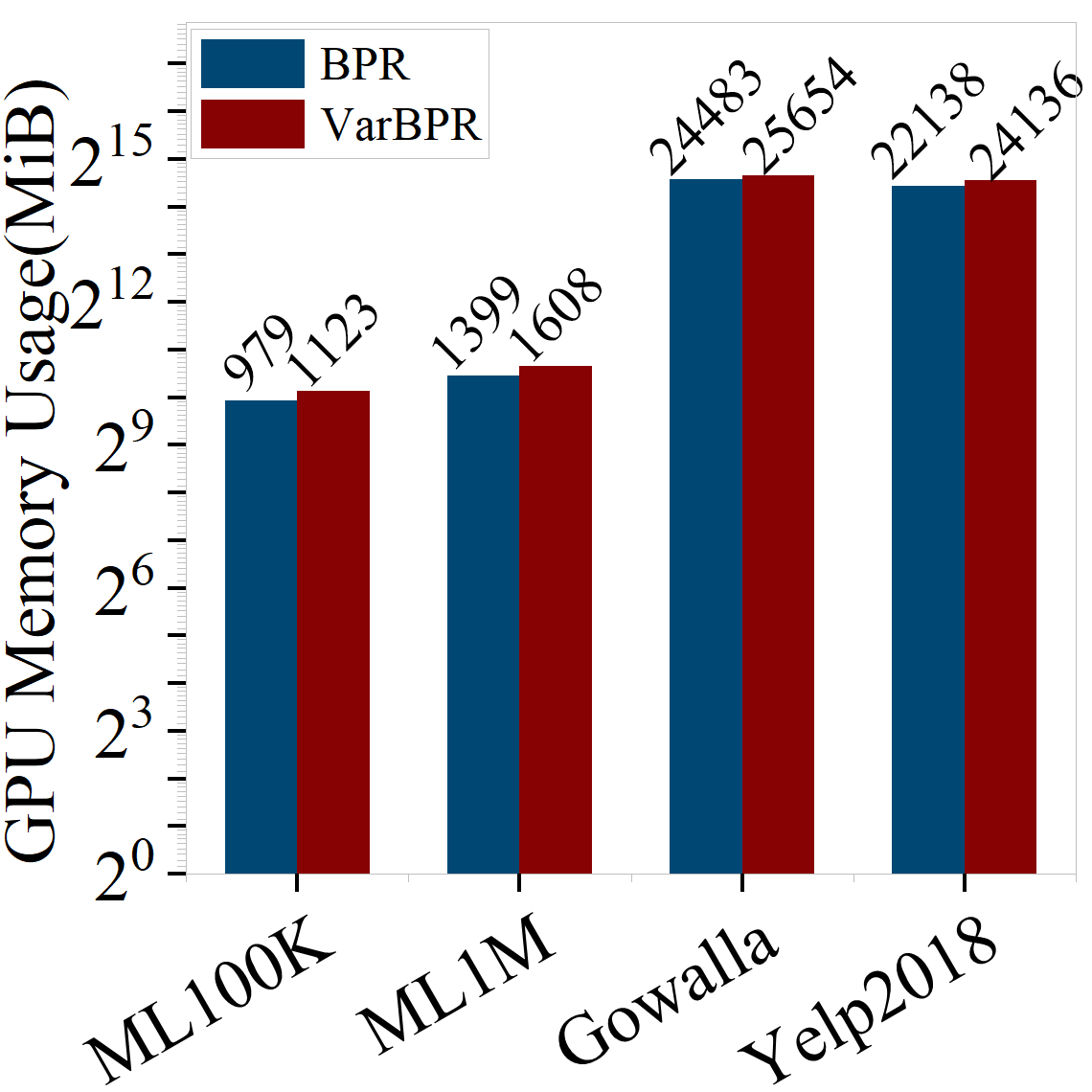}}
	\caption{VarBPR matches BPR efficiency and scales linearly.}\label{fig:sacle}
\end{figure}
Each unique combination of direction and strength parameters defines a distinct operating point on the exposure distribution and yields a corresponding pair of accuracy (NDCG) and long-tail exposure (APLT) metrics. Fig.~\ref{fig:pareto} provides empirical evidence from the MovieLens 100K dataset: VarBPR successfully traces a Pareto frontier comprising multiple optimal operating points, demonstrating its ability to achieve controllable long-tail exposure while achieving the best accuracy at the current level of long-tail exposure (Pareto optimal).
Fig.~\ref{fig:dist} further provides an direct evidence of controlled long-tail exposure: with a long-tail oriented prior ($\lambda_1^+=1, \lambda_1^-=0.5,\lambda_3^-=0.5$ in Eqs.~\eqref{eq:prior1}-\eqref{eq:prior2}), increasing $(c_{\mathrm{pos}}, c_{\mathrm{neg}})$ results in increasingly uniform feature representations across head/tail items and users, which is accompanied by a controlled increase in long-tail exposure.

\subsection{Efficiency and Scalability}\label{sec:time}
Next, we turn to evaluating the efficiency of VarBPR(\textbf{RQ3}). Taking MF as the fixed backbone model, we run 100 training epochs and employ exactly the same hyperparameter settings as reported in Table~\ref{tab:exp1}.
We report wall-clock training time and peak GPU memory against standard BPR in Fig.~\ref{fig:sacle}.
Experiments on MovieLens-100K and MovieLens-1M are conducted on a local machine with an NVIDIA GeForce RTX 3060 (12GB), while Yelp2018 and Gowalla are evaluated on a server with an NVIDIA Tesla V100-SXM2 (32GB).
Overall, VarBPR incurs comparable time and memory costs to BPR and scales linearly, consistent with its $\mathcal{O}(M{+}N)$ per-interaction complexity.

\subsection{Ablation on Inference Components}
\begin{table}[!ht]
	\caption{Ablation on key inference components.}\label{tab:ablation}
	\centering
	\resizebox{0.5\textwidth}{!}{
		\begin{tabular}{lllll}
			\toprule
			\multirow{2}{*}{Method} &   \multicolumn{2}{c}{MovieLens100K} &\multicolumn{2}{c}{MovieLens1M}  \\\cmidrule{2-5}
			& Recall@20 & NDCG@20 & Recall@20 & NDCG@20  \\\midrule
            \textit{w/o prior.} (uni. $\boldsymbol{\pi}^\pm $)   & 0.3389 $\textcolor{myblue}{_{4.9\%\downarrow}}$ & 0.4567 $\textcolor{myblue}{_{7.2\%\downarrow}}$ & 0.2333 $\textcolor{myblue}{_{2.3\%\downarrow}}$ & 0.4408 $\textcolor{myblue}{_{3.9\%\downarrow}}$ \\
            
            \textit{w/o VI} (uni. $\boldsymbol\alpha,\boldsymbol\beta$) & 0.2915 $\textcolor{myblue}{_{18.3\%\downarrow}}$& 0.4034 $\textcolor{myblue}{_{17.9\%\downarrow}}$& 0.2175 $\textcolor{myblue}{_{9.2\%\downarrow}}$& 0.4153 $\textcolor{myblue}{_{9.4\%\downarrow}}$ \\
            
           \textit{w/o plug-in} (ELBO)& 0.3631 $\textcolor{myred}{_{1.8.\%\uparrow}}$ & 0.5006 $\textcolor{myred}{_{1.8\%\uparrow}}$& 0.2428 $\textcolor{myred}{_{1.3\%\uparrow}}$& 0.4639 $\textcolor{myred}{_{1.2\%\uparrow}}$ \\
			VarBPR & \textbf{0.3566} & \textbf{0.4919} &  \textbf{0.2396}& \textbf{0.4586}\\
			\bottomrule
		\end{tabular}
	}
\end{table}
\begin{figure}[!h]
	\centering
	\subfigure[Jensen gap vs. margin variance.]
	{\includegraphics[width=0.492\columnwidth]{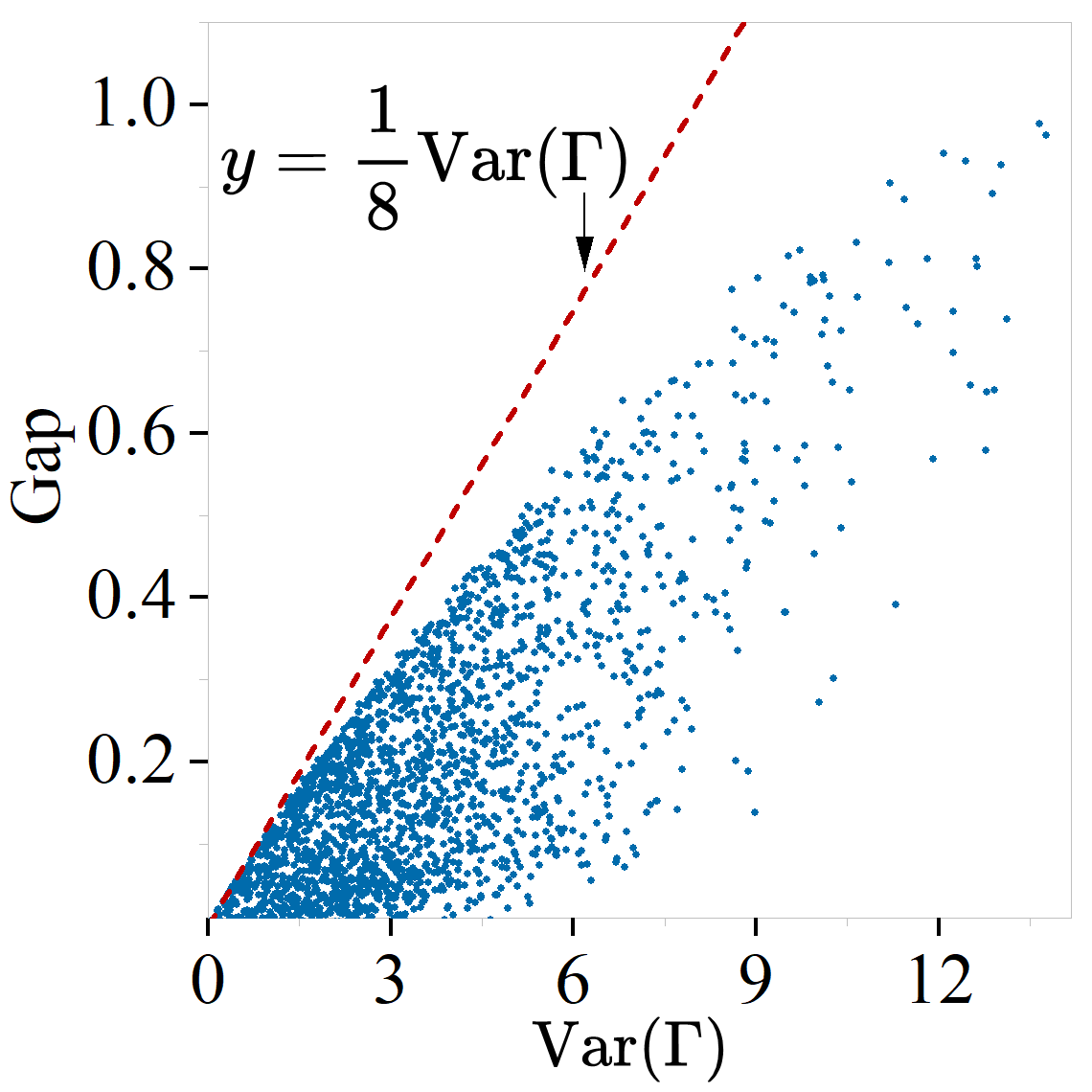}\label{fig:gap1}}
	\subfigure[Jensen gap across epochs.]
	{\includegraphics[width=0.492\columnwidth]{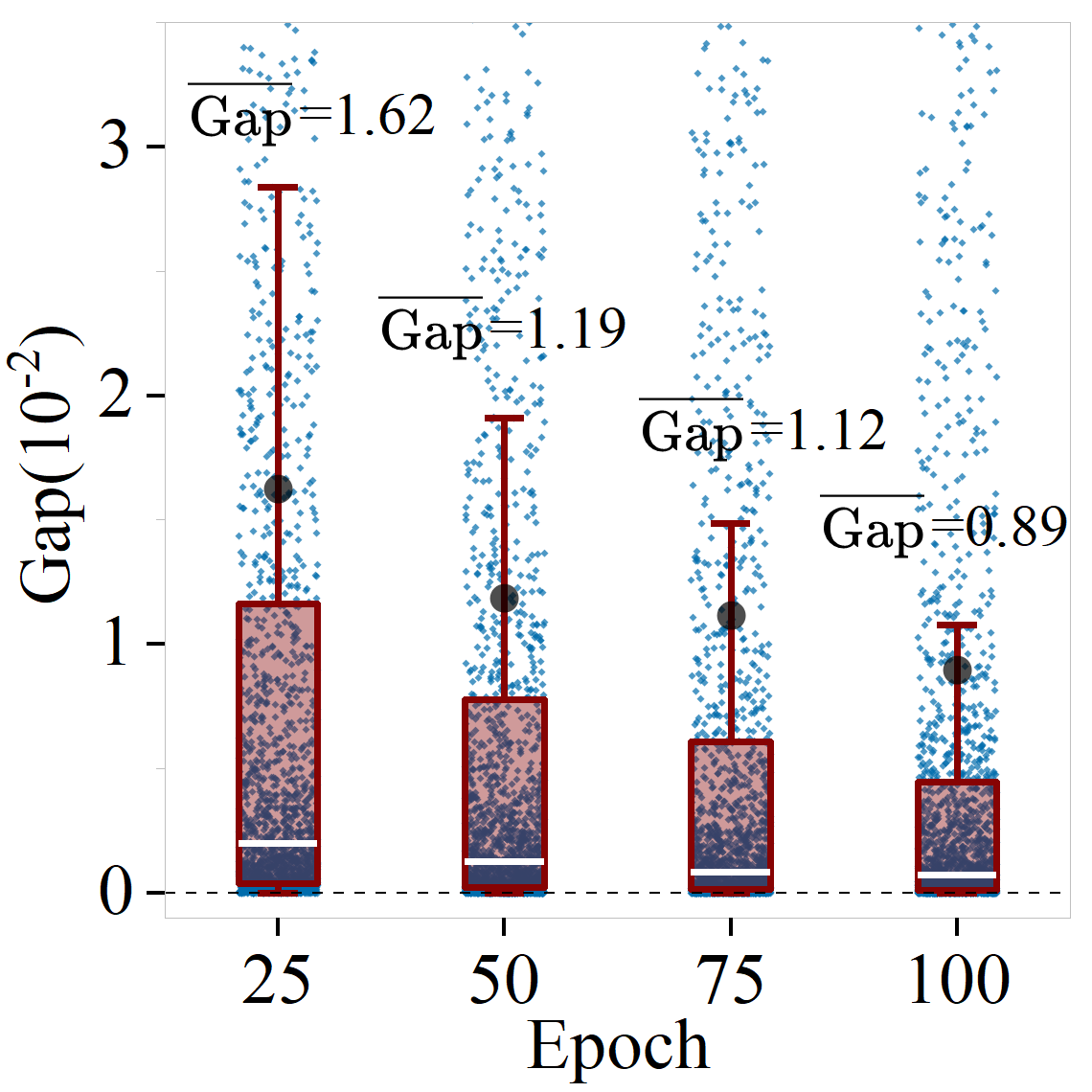}\label{fig:gap2}}
    \caption{Posterior compression error is controlled.}
\end{figure}
We conduct ablations in Table~\ref{tab:ablation} to isolate the contribution of each design choice in VarBPR (\textbf{RQ4}). \textit{w/o VI.} removes variational inference by replacing posteriors $(\boldsymbol\alpha,\boldsymbol\beta)$ with uniform weights. As a result, the centers reduce to simple mean pooling, leading to a significant drop in performance. This outcome underscores the critical role of variational inference in the model's effectiveness. The variational BPR posterior is the closed-form solution to a preference alignment variational problem that incorporates denoising and popularity debiasing regularization, which leads to richer and more reliable inference of intent signals.

\textit{w/o prior} removes prior by using uniform prior $\boldsymbol\pi^\pm$, resulting in a  consistent degradation. It also shows that a tailored exposure prior helps improve performance. However, we argue that the significance of the prior extends beyond just accuracy improvement. More importantly, it serves as a directional knob for controllable exposure, acting as an active module that allows practitioners to specify exposure objectives with particular goals in mind, such as category balance, quality, or fairness.

\textit{w/o plug-in} removes posterior summarization and optimizes the ELBO in Eq.~\eqref{eq:ideal-elbo2}. Its performance is slightly higher than the summarized VarBPR, indicating that plug-in approximation incurs a small loss while enabling linear-time training. Notably, the plug-in approximation is designed as a deliberate efficiency-performance trade-off. For flexibility, we also provide the ELBO variant of VarBPR in our open-source repository for users to choose based on needs.


\begin{figure}[!h]
	\centering
	\subfigure[5\% Artificial Noise]
    {\includegraphics[width=0.47\columnwidth]{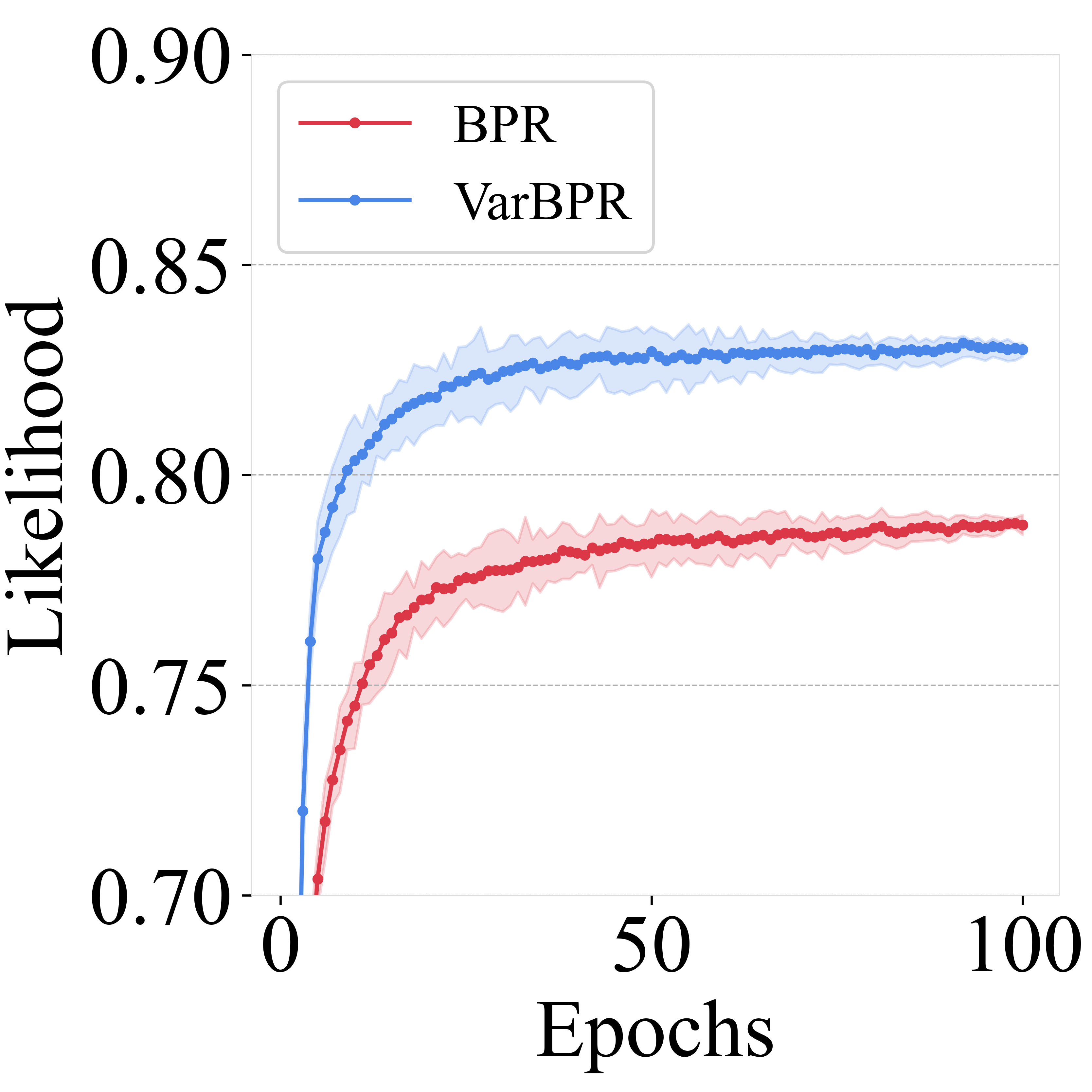}\label{fig:like5}}
	\subfigure[10\% Artificial Noise]{
		\includegraphics[width=0.47\columnwidth]{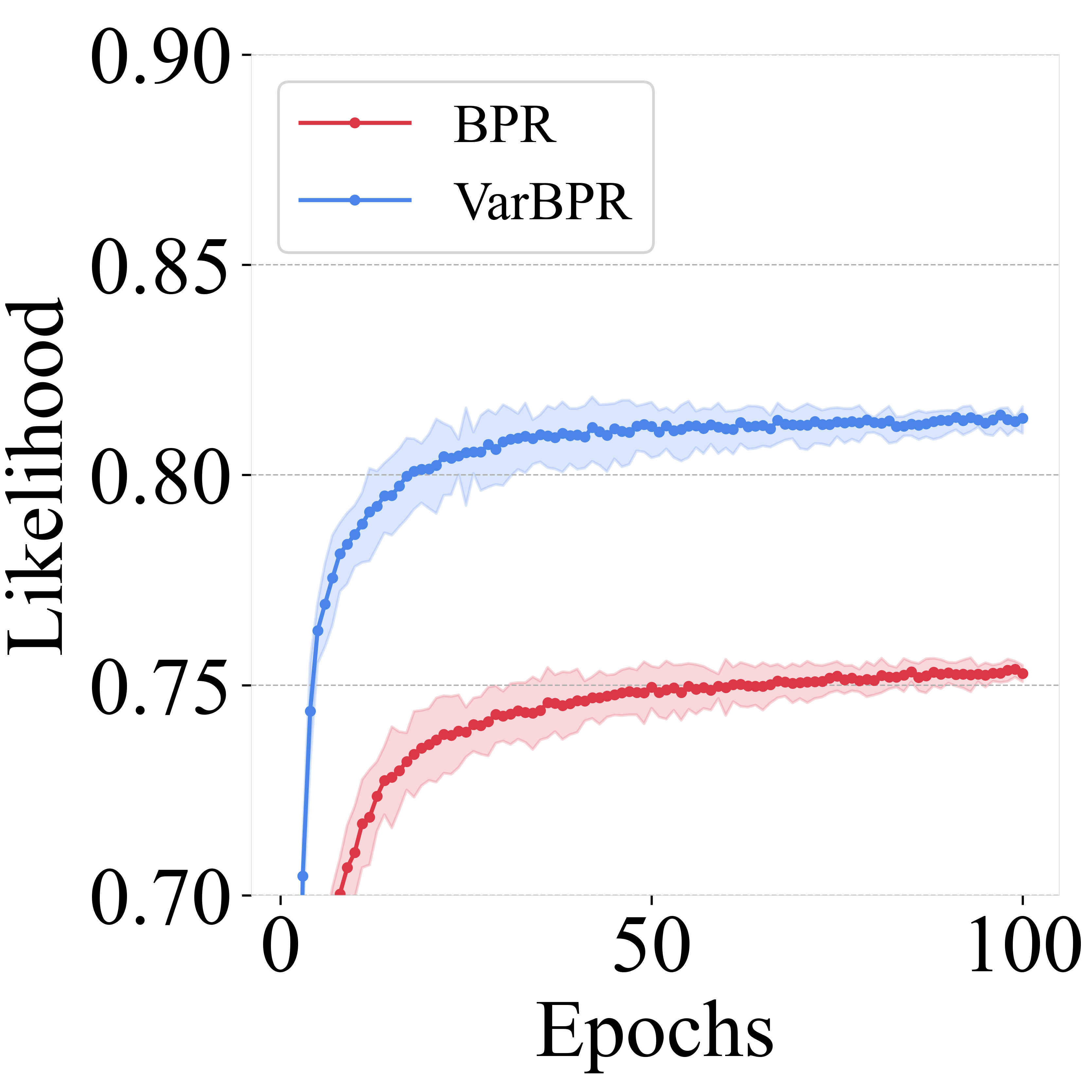}\label{fig:like10}}
	\caption{Likelihood vs. training epochs under increasing noise: the widening gap indicates that VarBPR degrades more slowly than BPR and is more robust to noise.}
	\label{fig:likelihood}
\end{figure}

\subsection{Validating Posterior Compression}
We validate the controllability of the plug-in Jensen gap (\textbf{RQ5}).
We train with VarBPR $\ell(\mathbb{E}[\Gamma])$ and simultaneously evaluate the uncompressed ELBO $\mathbb{E}[\ell(\Gamma)]$ on the same interactions, where $\Gamma$ is the posterior-induced random margin.
We define the compression gap as $\mathrm{Gap}\triangleq \ell(\mathbb{E}[\Gamma])-\mathbb{E}[\ell(\Gamma)]$.
Fig.~\ref{fig:gap1} samples $2^{11}$ interactions and plots Jensen $\mathrm{Gap}$ against $\mathrm{Var}(\Gamma)$; all points lie below $\tfrac18\,\mathrm{Var}(\Gamma)$, matching our theoretical bound.
Fig.~\ref{fig:gap2} shows per-epoch boxplots of Jensen $\mathrm{Gap}$ over all interactions, where the mean and median steadily decrease with training.
These results support the plug-in approximation as controlled, with a bounded compression error that steadily decreases over training.

\subsection{Robustness Study}
We next evaluate the robustness of VarBPR under noisy implicit feedback (\textbf{RQ6}). 
Since VarBPR maximizes a variational lower bound of the likelihood, it is expected to better preserve the underlying pairwise likelihood than standard BPR when observations are corrupted.
To test this, we trained the MF backbone on MovieLens100K using either BPR or VarBPR, and monitored the dataset-level likelihood at each epoch. Following~\cite{Bin:2023:ICDE}, we exclude training interactions, flip held-out test items to positive instances, and treat the remaining un-interacted items as negatives; the resulting likelihood is computed with the same sigmoid model as in BPR.
Then we inject $5\%$ and $10\%$ synthetic label noise into the training set, with the likelihood trajectories reported in Fig.~\ref{fig:like5} and~\ref{fig:like10}.VarBPR attains higher likelihood at both noise levels, and the widening likelihood gap as noise increases indicates that VarBPR degrades more slowly than BPR, i.e., it is less sensitive to noise.

\section{Conclusion}
This work establishes VarBPR as a tractable variational framework for controllable implicit-feedback ranking optimization, directly addressing three practical challenges in collaborative filtering: parameter estimation, noise perturbation, and popularity-driven exposure bias.
VarBPR recasts implicit-feedback pairwise learning as variational inference over discrete latent indexing variables that explicitly model noise and indexing uncertainty, and it separates training into two stages. In the variational inference stage, we develop a unified ELBO/regularization formulation that integrates preference alignment, denoising, and popularity debiasing, yielding closed-form posteriors with clear control semantics. Specifically, the prior shape specifies a target exposure pattern, while the temperature/regularization strength governs posterior--prior adherence, so exposure controllability emerges as an endogenous and interpretable outcome of VI rather than a post-hoc heuristic.
In the variational learning stage, we show that scalable learning is enabled by a posterior-compression objective that reduces the ELBO computation to linear, and we justify the resulting approximation via an explicit Jensen-gap bound. 
Theoretically, we provide interpretable generalization guarantees by identifying a structural error component induced by variational inference, decomposing it into controllable parts governed by the variational parameters, and characterizing the associated opportunity cost of prioritizing certain exposure patterns. This analysis yields a unified tuning principle centered on managing this cost and offers a concrete analytical lens for understanding and designing controllable recommender systems.
As a result, VarBPR supports recommender systems that are simultaneously effective, efficient, and deployable under application-specific exposure policies.
Future work may build on this opportunity-cost lens to develop principled methods for data-adaptive prior design and adaptive exposure-policy control.

\normalem
\bibliographystyle{IEEEtran}
\bibliography{ref}

\begin{IEEEbiography}
	[{\includegraphics[width=1in,height=1.25in,clip,keepaspectratio]{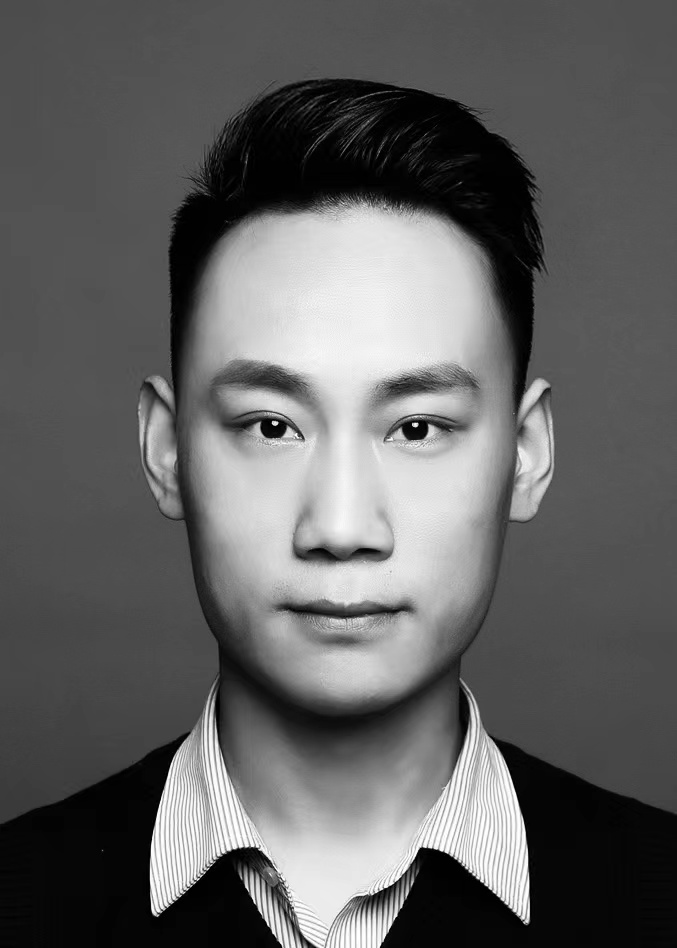}}]{Bin Liu} 
	received the Ph.D. degree from the School of Electronic Information and Communications, Huazhong University of Science and Technology, Wuhan, China, in 2023. He is currently affiliated with the School of Computing and Artificial Intelligence, Southwest Jiaotong University, Chengdu, China. He has published over 10 academic papers in top journals and conferences, including IEEE TPAMI, IEEE TKDE, IEEE ICDE. His research focuses on developing self-supervised and statistical machine learning techniques to advance information retrieval, ranking, and classification.
\end{IEEEbiography}

\begin{IEEEbiography}
[{\includegraphics[width=1in,height=1.25in,clip,keepaspectratio]{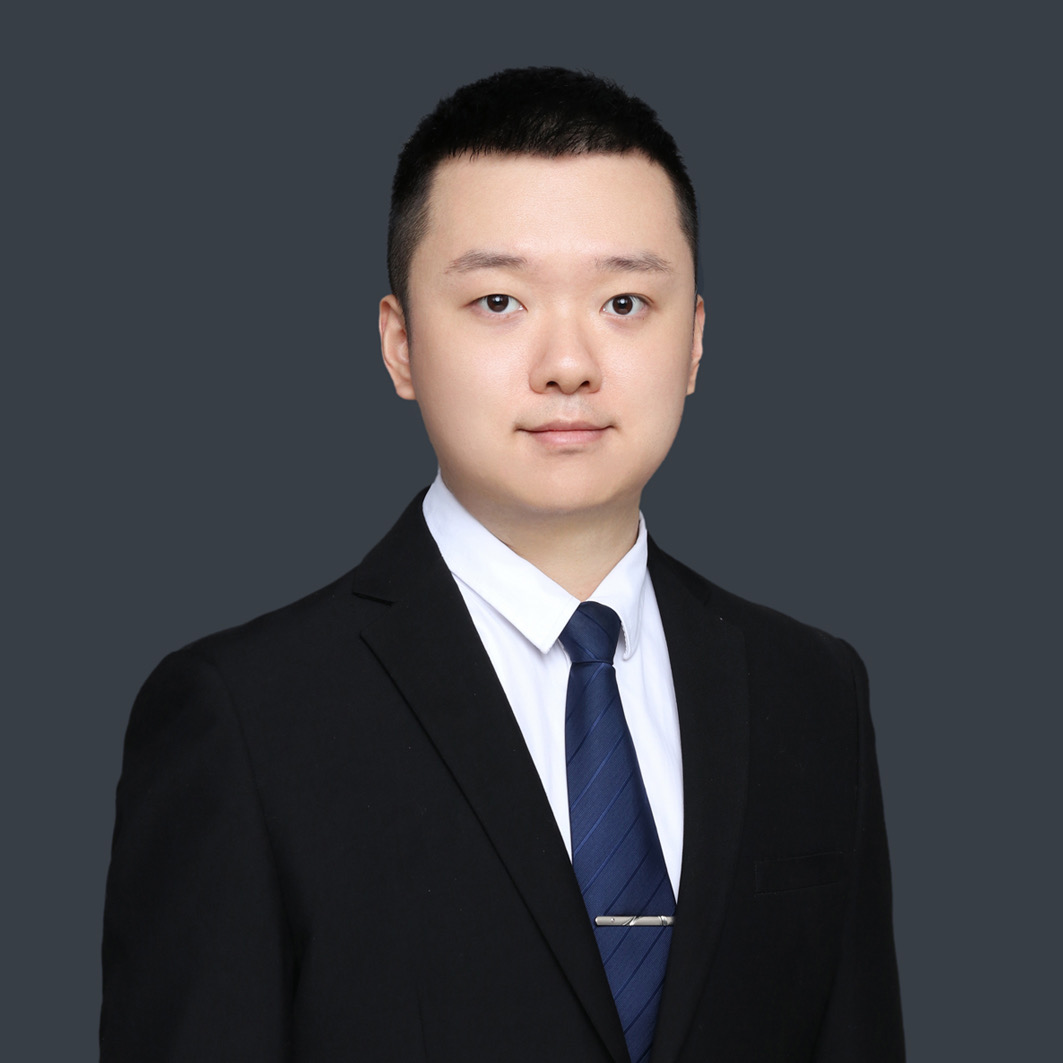}}]{Xiaohong Liu} (Member, IEEE)
received the Ph.D. degree in electrical and computer engineering from McMaster University, Canada, in 2021. He is currently an associate professor with the School of Computer Science, Shanghai Jiao Tong University, China. His research interests lie in computer vision and multimedia. He has published over 100 academic papers in top journals and conferences, and has been recognized
among the Stanford Top 2\% Scientists (2025), the Microsoft Research Asia StarTrack Scholars(2024), Chinese Government Award for Outstanding Self-financed Students Abroad (2021), and Borealis AI Fellowships (2020). He serves as an associate editor of the ACM Transactions on Multimedia Computing, Communications, and Applications.
\end{IEEEbiography}

\begin{IEEEbiography}
	[{\includegraphics[width=1in,height=1.25in,clip,keepaspectratio]{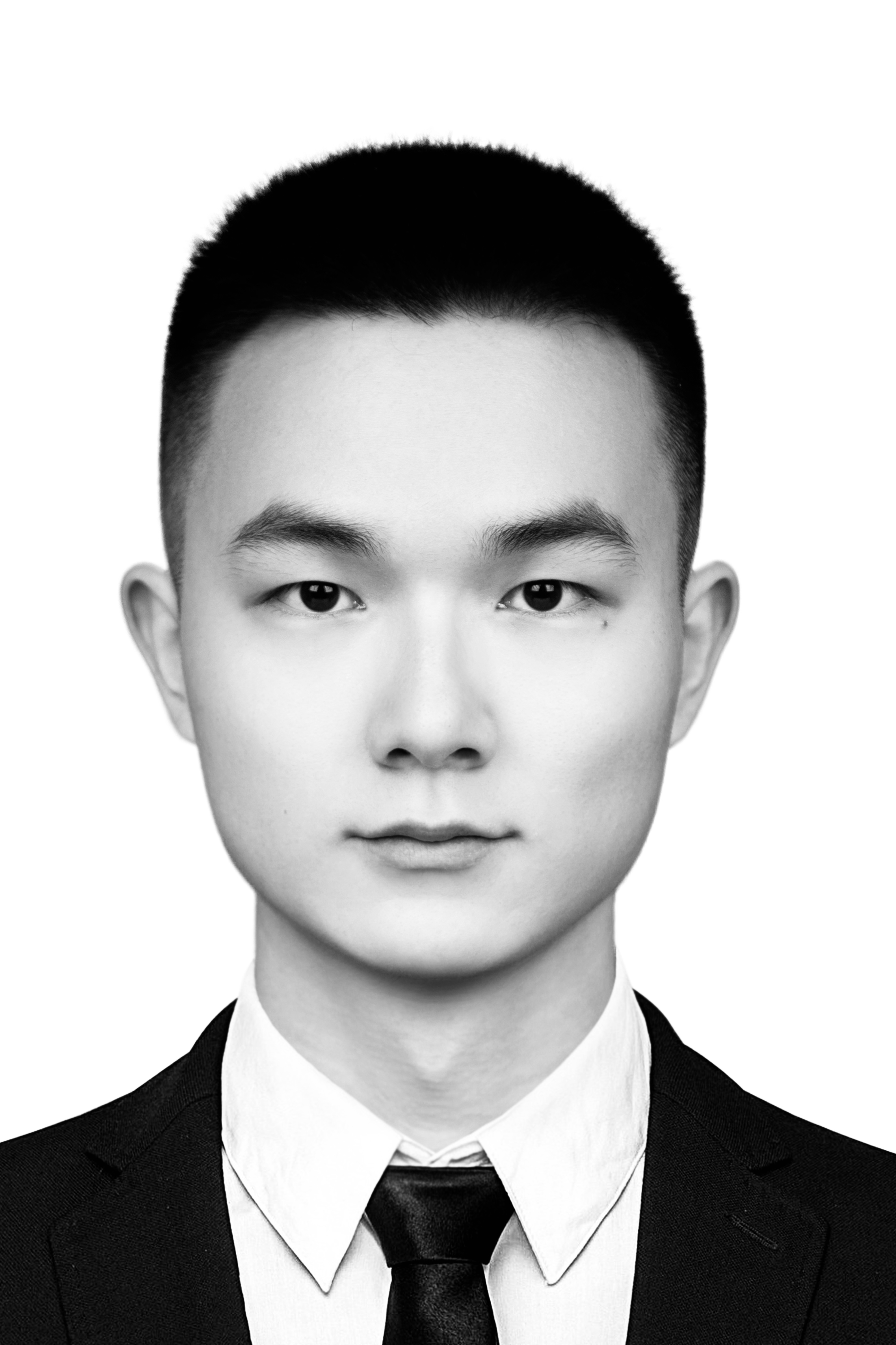}}]{Qin Luo} 
	received the B.S. degree from School of Electronic Information and Communications, Huazhong University of Science and Technology (HUST), Wuhan, China, in 2023. He is currently working toward the M.S. degree with the School of Electronic Information and Communications, Huazhong University of Science and Technology (HUST). He has published paper in IEEE Transactions on Knowledge and Data Engineering. His research interests include machine learning, artificial intelligence, and recommender systems.
\end{IEEEbiography}

\begin{IEEEbiography}  [{\includegraphics[width=1in,height=1.25in,clip,keepaspectratio]{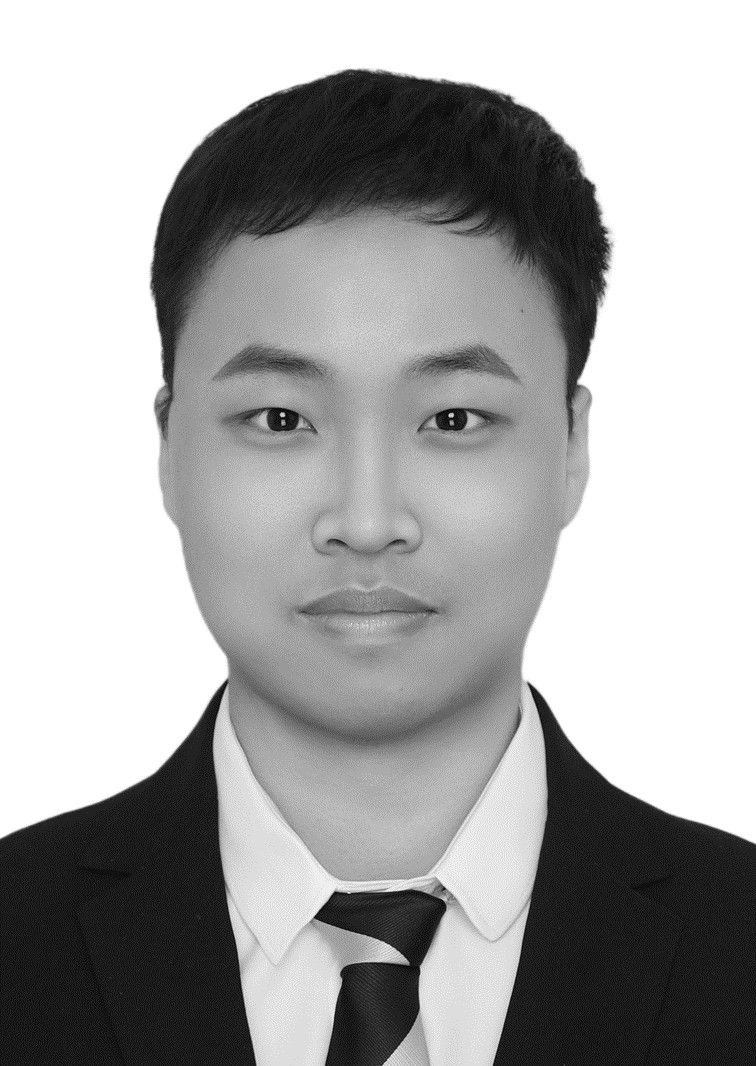}}] {Ziqiao Shang} received the M.Eng. degree from the School of Electronic Information and Communications, Huazhong University of Science and Technology, Wuhan, China, in 2024. His research interests primarily focus on facial-related computer vision tasks, including facial action unit detection and micro-expression recognition. He is currently pursuing the Ph.D. degree with the School of Intelligence Science and Technology, Nanjing University, Suzhou, China, under the supervision of Professor Lan-Zhe Guo.
\end{IEEEbiography}

\begin{IEEEbiography}[{\includegraphics[width=1in,height=1.25in,clip,keepaspectratio]{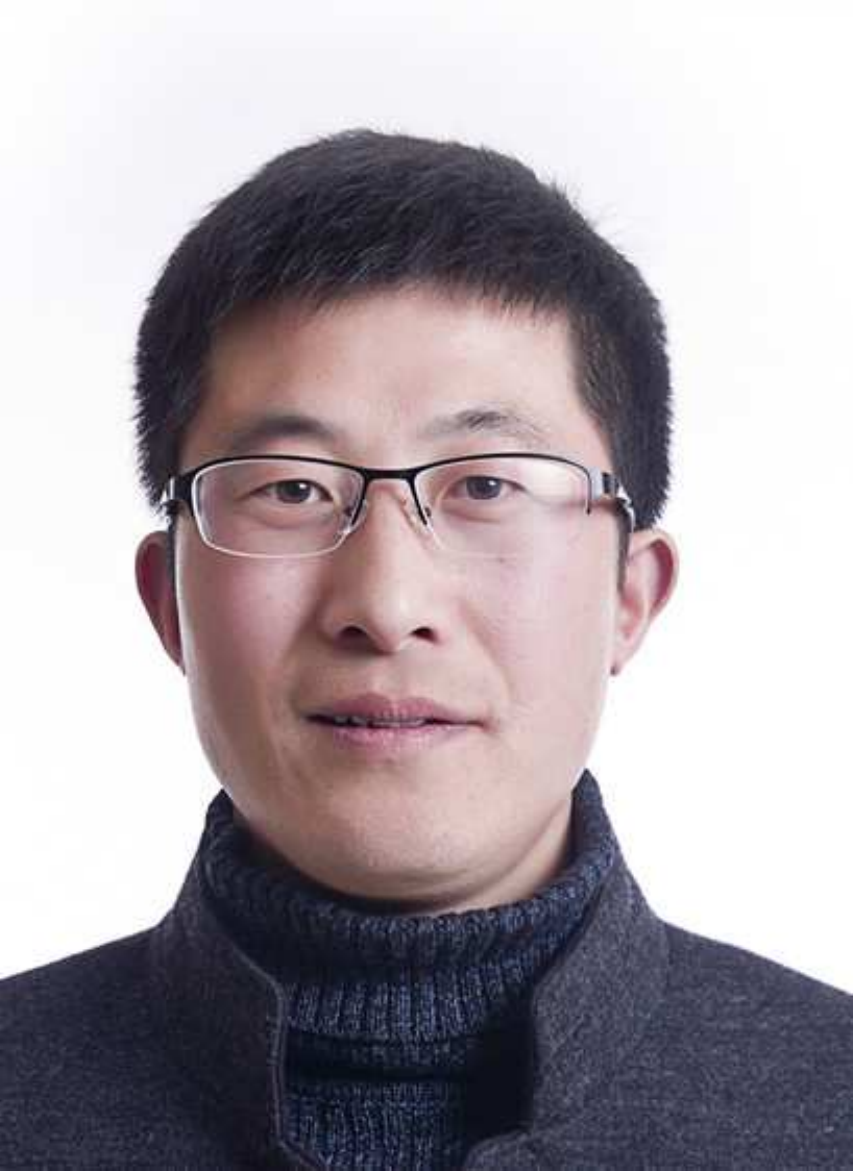}}]{Jielei Chu} 
	(Senior Member, IEEE) Jielei Chu received the
Ph.D. degrees in computer application and computer
science and technology from Southwest Jiaotong
University,Chengdu, China in 2020. He serves as Ed-
itorial Board Members for Scientific Reports. He has
published more than 40 scientific papers, including
IEEE Transactions on Pattern Analysis and Machine
Intelligence, IEEE Transactions on Knowledge and
Data Engineering, IEEE Transactions on Cybernetics
and IEEE Transactions on Multimedia. His research
interests include Deep Learning, Semi-supervised
Learning, Federated Learning and Brain-Inspired Intelligence.
\end{IEEEbiography}

\begin{IEEEbiography}[{\includegraphics[width=1in,height=1.25in,clip,keepaspectratio]{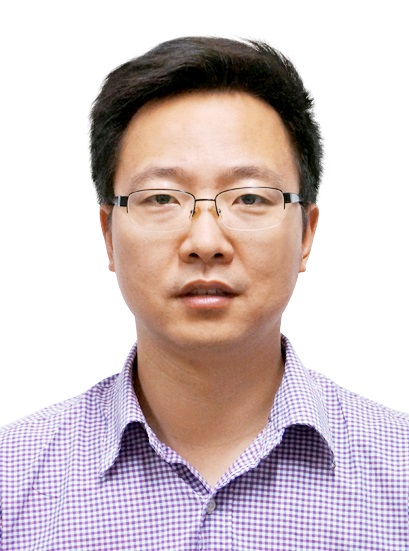}}]{Lin Ma} received the B.Sand M.S. degrees from the School of Information Science and Technology, Southwest Jiaotong University (SWJTU), SiChuan, China, in 2002 and 2007, respectively. He is currently a Associate Professor with the School of Transportation and Logistics, SWJTU. His research interests include machine learning, optimization method, intelligent transportation, multimodal freight transport.
\end{IEEEbiography}

\begin{IEEEbiography}
	[{\includegraphics[width=1in,height=1.25in,clip,keepaspectratio]{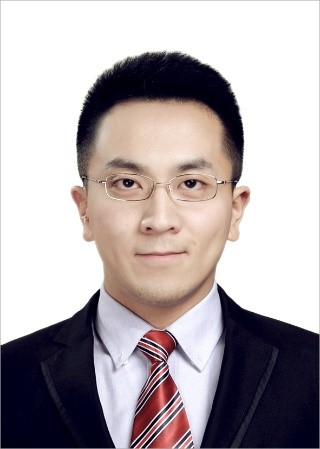}}]{Zhaoyu Li} is currently a senior engineer and the department head of CREC R\&D Center for Carbon Peaking and Carbon Neutrality of China Railway Group Limited. He is also a post-doctoral researcher in China Railway Group Limited and Southwest Jiaotong University, from 2023. He received his B.E. degree in civil engineering and M.E. degree in bridge and tunnel engineering from Southwest Jiaotong University, China, in 2012 and 2015, respectively, and the Ph.D. degree in Mechanical Engineering from the University of Auckland, New Zealand, in 2022. His research interests include machine learning, vibration energy harvesting, self-powered wireless sensor network, the CCUS and other intelligent systems and technologies for carbon neutrality.
\end{IEEEbiography}

\begin{IEEEbiography}  [{\includegraphics[width=1in,height=1.25in,clip,keepaspectratio]{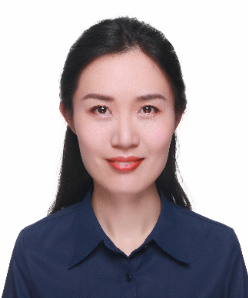}}] {Fei Teng} (Member, IEEE) received the BS and MS degrees from the Southwest Jiaotong University, China, in 2006 and 2008, respectively. She received her Ph.D. at Ecole Centrale Paris in France in 2011. She is currently a professor at the School of Computing and Artificial Intelligence, Southwest Jiaotong University. She is the reviewer of the IEEE Journal of Biomedical and Health Informatics, Information Sciences, and IEEE Transactions on Computers. She has authored or co-authored over 60 research papers in refereed journals and conferences. Her research interests include big data mining and service computing..
\end{IEEEbiography}

\begin{IEEEbiography}  [{\includegraphics[width=1in,height=1.25in,clip,keepaspectratio]{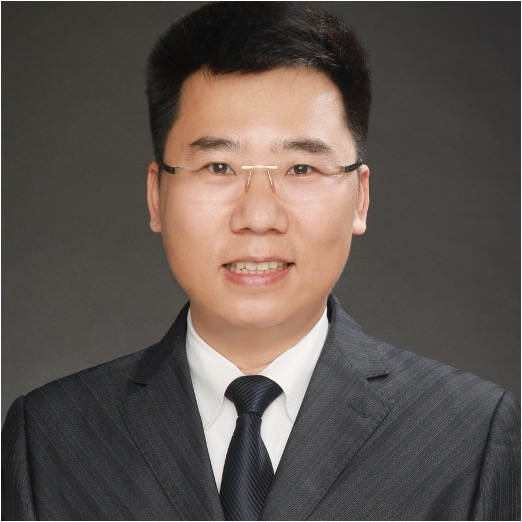}}] {Guangtao Zhai} (Fellow, IEEE) received the B.E. and M.E. degrees from Shandong Univer- sity,Shandong,China, in2001and 2004, respec- tively, and the Ph.D. degree from Shanghai Jiao Tong University, Shanghai, China, in 2009.From 2008 to2009, he was a Visiting Student with the Department of Electrical and Computer Engineering, McMaster University, Hamilton,ON,Canada,where he was a Postdoctoral Fellow from 2010to 2012. From2012 to 2013, he was a Humboldt Research Fellow with the Institute of Multimedia Communication and
	Signal Processing, Friedrich Alexander University of Erlangen-Nuremberg, Germany. He is currently a Professor at the Department of Electronics Engineering, Shanghai Jiao Tong University. He is a member of IEEE CAS VSPC TC and MSA TC. He has received multiple international and domestic research awards, including the Best Paper Award of IEEECVPR DynaVis Workshop 2020, the EasternScholarand Dawn program professorship of Shanghai, theNSFC Excellent Young Researcher Program, and the National Top Young ResearcherAward. He is serving as the Editor-in-Chief for Displays (Elsevier) and is on the Editorial Board for Digital Signal Processing(Elsevier) and Science China:Information Science.
\end{IEEEbiography}

\begin{IEEEbiography}  [{\includegraphics[width=1in,height=1.25in,clip,keepaspectratio]{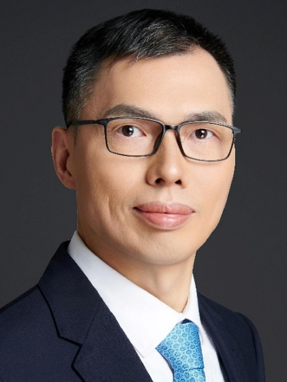}}] {Tianrui Li} (Senior Member, IEEE) received the BS, MS, and PhD degrees from Southwest Jiaotong University, Chengdu, China, in 1992, 1995, and 2002, respectively. He was a Postdoctoral Researcher with SCK•CEN, Belgium, from 2005 to 2006. He is currently a Professor and the Director of the Key Laboratory of Cloud Computing and Intelligent Techniques, Southwest Jiaotong University. He serves as Editor-in-Chief of Human-Centric Intelligent Systems, Editor of Information Fusion and Associate Editor of ACM Transactions on Intelligent Systems and Technology. He has authored or coauthored more than 500 research papers in refereed journals (e.g. IEEE TPAMI, IJCV, IEEE TKDE) and conferences (e.g. CVPR, ICCV, KDD). His research interests include big data, cloud computing, data mining, granular computing, and rough sets. He is a Fellow of IRSS and Senior Member of the ACM and IEEE.
\end{IEEEbiography}


\end{document}